\definecolor{DarkGreen}{rgb}{0.1,0.5,0.1}
\definecolor{DarkRed}{rgb}{0.5,0.1,0.1}
\definecolor{DarkBlue}{rgb}{0.1,0.1,0.5}
 \let\mathscr\relax% just so we can load this and rsfs
\newcommand{\poa}{\mathrm{PoA}}
\newcommand{\pos}{\mathrm{PoS}}
\newcommand{\opt}{\ensuremath{\mathrm{OPT}}}
\renewcommand{\tilde}{\widetilde}
\newcommand{\ex}[1]{\mathbb{E}\left[#1\right]}
\newcommand{\mechav}{\ensuremath{\mathcal{M}_{\mathrm{AV}}}}
\def\epsilon{\varepsilon}
\DeclareMathOperator*{\argmax}{\mathrm{argmax}}
\theoremstyle{plain}
\newtheorem{theorem}{Theorem}[section]
\newtheorem{corollary}[theorem]{Corollary}
\newtheorem{proposition}[theorem]{Proposition}
\theoremstyle{definition}
\newtheorem{definition}[theorem]{Definition}
\theoremstyle{remark}
\newtheorem*{remark}{Remark}
\title{Decentralised Update Selection with Semi-Strategic Experts\thanks{\,This work was supported by the ERC Advanced 
 Grant 788893 AMDROMA ``Algorithmic and Mechanism Design Research in 
 Online Markets'', the MIUR PRIN project ALGADIMAR ``Algorithms, Games, 
 and Digital Markets'', and the NWO Veni project No.~VI.Veni.192.153.}}
\author[1]{Georgios Amanatidis}
\author[2]{Georgios Birmpas}
\author[4]{Philip Lazos}
\author[3,4]{Francisco Marmolejo-Cossío}
\affil[1]{\small Department of Mathematical Sciences; University of Essex}
\affil[ ]{{\small\textsf{\href{mailto:georgios.amanatidis@essex.ac.uk}{georgios.amanatidis}@essex.ac.uk}}\smallskip}
\affil[2]{\small Department of Computer, Control and Management Engineering; Sapienza University of Rome}
\affil[ ]{{\small\textsf{\href{mailto:birbas@diag.uniroma1.it}{birbas@diag.uniroma1.it}}}\smallskip}
\affil[3]{{\small School of Engineering and Applied Sciences; Harvard University}}
\affil[ ]{{\small\textsf{\href{mailto:fjmarmol@seas.harvard.edu}{fjmarmol@seas.harvard.edu}}}\smallskip}
\affil[4]{{\small IOHK}}
\affil[ ]{{\small\textsf{
\{\href{mailto:philip.lazos@iohk.io}{philip.lazos}, 
\href{mailto:francisco.marmolejo@iohk.io}{francisco.marmolejo}\}@iohk.io}}\smallskip}
\date{\today}
\begin{document}

\maketitle

\begin{abstract}
     Motivated by governance models adopted in blockchain applications, we study the problem of selecting appropriate system updates in a decentralised way. Contrary to most existing voting approaches, we use the input of a set of motivated experts of varying levels of expertise. In particular, we develop an approval voting inspired selection mechanism through which the experts approve or disapprove the different updates according to their perception of the quality of each alternative. Given their opinions, and weighted by their expertise level, a single update is then implemented and evaluated, and the experts receive rewards based on their choices. We show that this mechanism always has approximate pure Nash equilibria and that these achieve a constant factor approximation with respect to the quality benchmark of the optimal alternative. Finally, we study the repeated version of the problem, where the weights of the experts are adjusted after each update, according to their performance. Under mild assumptions about the weights, the extension of our mechanism still has approximate pure Nash equilibria in this setting.
\end{abstract}

\newcommand{\vecc}[1]{\ensuremath{\mathbf{#1}}}
\newcommand{\mech}{\ensuremath{\mathcal{M}}}
\newcommand{\sset}[1]{\left\{ #1\right\}}
\newcommand{\ssets}[1]{\{ #1\}}
\newcommand{\fwh}[1]{\; \left| \; #1 \right.}
\newcommand{\fwhs}[1]{\; | \; #1 }
\newcommand{\winning}{\ensuremath{j^\star}}
\allowdisplaybreaks

%Definitions
\newcommand{\repcoin}{\texttt{ReputationCoin}}
\newcommand{\RCcoin}{\ensuremath{\textsc{RC}}}
\newcommand{\expertP}{\ensuremath{P}}
\newcommand{\meritcoin}{\textttt{MeritCoin}}

\newcommand{\nExperts}{\ensuremath{n_e}}
\newcommand{\nVoters}{\ensuremath{n_v}}
\newcommand{\nUpdates}{\ensuremath{n_u}}

\newcommand{\trueP}{\ensuremath{\texttt{TP}}}
\newcommand{\trueN}{\ensuremath{\texttt{TN}}}
\newcommand{\gainE}{\ensuremath{\texttt{ge}}}
\newcommand{\gainV}{\ensuremath{\texttt{gv}}}

\section{Introduction}

In 2009, Satoshi Nakamoto published a landmark whitepaper outlining the core functionality of Bitcoin \cite{nakamoto2008bitcoin}, a decentralized blockchain-based ledger of transactions for a peer-to-peer digital currency. Indeed, a key feature of Bitcoin is precisely in its decentralized nature, whereby no single entity controls the operation of the system, a feat which is achieved by an innovative amalgamation of cryptography and carefully aligned incentives amongst participants in the protocol. In recent years the ecosystem for similar decentralized systems has grown drastically, and given that for each of these systems no single entity holds control, users often find themselves in a position where they need to reach a consensus on critical decisions regarding the very platform they participate in. 

A fundamental example of this governance dilemma is that of creating and implementing software updates for the underlying infrastructure of a blockchain-based solution \cite{vitalik3}. %\cite{vitalik2} \cite{vitalik3}.
Such drastic upgrades are known as hard forks in blockchain-based systems, and historically, there have been scenarios where cryptocurrencies have split due to opinion differences regarding the infrastructure of the blockchain (e.g., Ethereum vs.~Ethereum Classic or Bitcoin vs.~Bitcoin Cash). Beyond the confusion and inconvenience for the users caused by such splits, these fragmentations have very real implications for the security of blockchain-based systems, which is often strongly dependant on the number of users within a system.

Software updates, as opposed to general collective decisions, are particularly interesting due to two salient features of the problem structure: 1) adequately evaluating the relative merits of software update proposals often requires a high degree of expertise 2) the overall stakes of the software update process are incredibly high. Indeed, if a proposal which is collectively chosen for uptake happens to have a fatal bug which has failed to be caught, its uptake can have catastrophic implications for the underlying system. 

In this work, we focus on providing a simple on-chain methodology whereby users from a blockchain-based platform can collectively decide which software updates to implement, thereby reducing the potential for aforementioned hard fork frictions. To do so, we assume the existence of a set of users with differing levels of expertise regarding software updates. These experts are then faced with proposals for multiple potential software updates, where each expert not only formulates independent opinions on whether a given proposal will succeed or not during implementation, but may also harbor additional incentives for implementing one proposal over another. Each expert casts votes in approval or disapproval of each proposal, and as a function of these votes and the historical merit of each expert, a proposal is chosen for implementation. Ultimately, we assume the success or failure of the proposal is recorded on-chain, and as a function of each expert's votes and the outcome of the proposal, the system pays experts for their participation.

\subsection{Our Contributions}

Our main contributions can be summarized as follows:
\begin{itemize}
    \item In Section \ref{sec:preliminaries} we introduce our framework for encoding expert preferences amongst software proposals, the notion of semi-strategic decision making amongst experts, and a benchmark for proposal quality which we subsequently use to measure the performance of equilibria in the semi-strategic voting setting.
    \item In Section \ref{sec:approval-voting}, we present our approval voting inspired mechanism. We provide some sufficient conditions that limit potentially damaging deviations and show existence and Price of Anarchy (PoA) results for pure Nash equilibria. In Sections \ref{sec:approx-equilibria} and \ref{sec:PoA}, we study approximate pure Nash equilibria for semi-strategic experts and show that the PoA in this setting is exactly 2.
    \item Finally, in Section \ref{sec:repeated} we consider a repeated game setting that dynamically reflects expert performance via weight updates in rounds of approval voting (a measure of ``reputation''). In this setting we show honest voting is an approximate pure Nash equilibrium and that repeated voting has a Price of Stability of at most 2. 
    
\end{itemize}

%In Section \ref{sec:preliminaries} we introduce our framework for encoding expert preferences amongst software proposals, the notion of semi-strategic decision making amongst experts, and a benchmark for proposal quality which we subsequently use to measure the performance of equilibria in the semi-strategic voting setting. In Section \ref{sec:approval-voting}, we present our approval voting inspired mechanism. We provide some sufficient conditions that limit potentially damaging deviations and show existence and Price of Anarchy (PoA) results for pure Nash equilibria. In Section \ref{sec:approx-equilibria} and Section \ref{sec:PoA} we study approximate pure Nash equilibria for semi-strategic experts and show that the PoA in this setting is exactly 2. Finally, in Section \ref{sec:repeated} we consider a repeated game setting that dynamically reflects expert performance via weight updates in rounds of approval voting (a measure of ``reputation''). In this setting we show honest voting is an approximate pure Nash equilibrium and that repeated voting has a Price of Stability of at most 2. 

\subsection{Related Work}

Due to the fact that our work focuses on creating payment mechanisms whereby a blockchain platform can elicit truthful expert opinions regarding potential software updates, it bears many similarities to existing literature in the realm of scoring rules and preference elicitation \cite{brier1950verification,savage1971elicitation,gneiting2007strictly,lambert2009eliciting}. 
In the context of software updates, implementing a classic scoring rule for one proposal would involve asking experts to report a probability $p \in [0,1]$ corresponding to their belief that a proposal would succeed or fail. 
As a function of this reported probability and the outcome of the proposal, the scoring rule pays experts in such a way that they always maximize their expected payment when reporting truthfully. 
The works of \citet{chen2014eliciting}, \citet{chen2011information}, and \citet{othman2010decision}, extend the scoring rule framework to incorporate decision-making between mutually exclusive alternatives, as is the case with our proposal setting. Indeed only one proposal is chosen, against which we must score expert performance. 

Our work however also considers the very real possibility (especially in the semi-anonymous and permissionless world of blockchain platforms) that expert incentives go beyond the payment which the mechanism offers them for providing their opinions, and that they instead exhibit distinct utilities for the ultimate implementation of different proposals. 
This assumption is similar to the setting of Proper Decision Scoring Rules, as explored in the work of \citet{oesterheld2020decision}, as well as Restricted Investment Contracts presented in the paper of \citet{carroll2019robust}, where potential solutions involve experts earning rewards proportional to the principal's earnings for the alternative chosen and they each have potentially conflicting beliefs over the quality of the various proposals. 

The mechanism we propose for eliciting expert beliefs builds off of a rich existing literature \cite{RePEc:spr:stchwe:978-3-642-02839-7} on Approval Voting (AV) first introduced by \citet{brams1978approval}, and then extended in \cite{fishburn1978strategic,fishburn1981approval,endriss2013sincerity}.
As in AV, experts voting consists of providing a subset of proposals which they approve (and, thus, a subset which they disapprove). Unlike typical AV, the underlying utility that experts derive from the outcome of the mechanism is tied to their beliefs regarding the success of each proposal. Furthermore, as mentioned in the work of \citet{laslier2010basic}, a common technique in AV involves restricting voter actions in natural ways (i.e., admissible and sincere voting profiles from \cite{fishburn1981approval} for example) to restrict the set of equilibria considered in the mathematical analysis of the model. Our work introduces a novel constraint of similar nature which we dub semi-strategic voting (see Section \ref{sec:preliminaries}).  For semi-strategic experts we demonstrate theoretical guarantees on our mechanism's equilibria resulting from AV. Finally, it is worth mentioning that AV has additionally seen much empirical work pointing to its practical success \cite{brams1991approval,regenwetter1998approval,brams2010going}.

\section{Preliminaries}
\label{sec:preliminaries}

    Let $N = \{1, 2, \ldots, n\}$ be a set of experts (typically indexed by $i$) and $U = \{1, 2, \ldots, k\}$ a set of proposed updates (typically indexed by $j$). Every update $j$ has an associated quality $q_j \in \{0, 1\}$ indicating whether it is beneficial for the system; we call updates with $q_j = 1$ ``good'' and those with $q_j = 0$ ``bad''. These $q_j$ values \emph{are not initially known}: only the value of the selected proposal is revealed \emph{after} the experts' vote has concluded. Specifically, each expert has her own prior about the $q_j$'s. Expert $i$ believes that $q_j$ has a probability $p_{ij}$ of being good. This `opinion' $p_{ij}$ is unaffected by what the other experts think or how they voted: no matter how some update $j$ was selected, expert $i$ believes that it will be good with probability $p_{ij}$. Every expert $i$ has a weight $w_i \ge 0$ indicating their ``power'' within the system. Additionally, each expert may have some (external) personal gain, depending on the outcome of the vote. We denote this external reward that expert $i$ will receive if update $j$ is implemented by $g_{ij}$; this value is known to expert $i$, but not to the mechanism.

The strategy of each expert is a vector 
${\vecc r_i} = (r_{i1}, \allowbreak\ldots, \allowbreak r_{ik})^\intercal \in \{0,1\}^{k\times 1}$, where $r_{ij}$ indicates the binary vote of expert $i$ on whether $q_j = 1$ or not. 
% ${\vecc r_i} = (r_{i1}, \ldots, r_{ik})^T \in [0, 1]$, where $r_{ij}$ indicates the likelihood that she believes $q_j = 1$. 
Let ${\vecc r} = \allowbreak (\vecc r_1,\allowbreak \vecc r_2, \ldots, \vecc r_n) \in \{0,1\}^{k \times n}$ denote the whole \emph{voting profile}. As is common, we write $(\vecc r'_i, {\vecc r_{-i}})$ to denote $(\vecc r_1, \ldots, \vecc r_{i-1}, \vecc r'_i, \vecc r_{i+1}, \ldots,  \vecc r_{n})$, as well as $(r'_{ij}, {\vecc r_{-ij}})$ to denote $((r_{i1}, \allowbreak\ldots,r_{i (j-1)}, r'_{ij}, r_{i (j+1)} \allowbreak, \ldots, r_{ik})^\intercal, {\vecc r_{-i}})$.

A mechanism $\mech = (x, \vecc f)$ consists of a (possibly randomised) selection rule $x$, which given $\vecc r$ and $\vecc w = (w_1, \ldots, w_n)$ returns
$x(\vecc r, \vecc w) \in U\cup\{0\}$ (i.e., $x$ could return the \emph{dummy} proposal 0, if no proposal is selected), and an expert reward function
$\vecc f$, which given $\vecc r$, $\vecc w$, a winning proposal $j^\star$, and its quality $q_{j^\star}$ returns $\vecc f(\vecc r, \vecc w, j^\star, q_{j^\star}) = (f_1(\vecc r, \vecc w, j^\star, q_{j^\star}),\ldots,f_n(\vecc r, \vecc w, j^\star, q_{j^\star})) \in \mathbb{R}^{1 \times n}$.

Given $\mech$, the expected reward of expert $i$ conditioned on $i$'s own perspective $\vecc p = (p_{i1}, \ldots, p_{in})$ uses an estimate of $f_i(\vecc r, \vecc w, j^\star, q_{j^\star})$ which depends on the knowledge of $\vecc r, \vecc w$, and $p_{ij^\star}$, i.e., the probability that $j^\star$ is a good proposal according to $i$ only:
\[ \ex{f_i(\vecc r, \vecc w, j^\star, q_{j^\star}) \,|\, \vecc p} =  p_{ij^\star} f_i(\vecc r, \vecc w, j^\star, 1) + (1-p_{ij^\star})  f_i(\vecc r, \vecc w, j^\star, 0) \,.\]
Thus, the expected utility of $i$ conditioned on her perspective
is
\begin{equation}
    u_i^{\mech}(\vecc r\;|\; \vecc w, \vecc p) = \ex{p_{ij^\star} \cdot g_{i j^\star} + \ex{f_i(\vecc r, \vecc w, j^\star, q_{j^\star})\,|\, \vecc p}}  \,,
\end{equation}
where the outer expectation is over the winning proposal $j^\star = x(\vecc r, \vecc w)$.
We adopt this approach for the utilities because when an expert declares her preference, we assume she is agnostic about the beliefs of other experts for any proposal.

% \color{gray} 
% $\vecc f(\vecc r, \vecc w) = (f_1(\vecc r, \vecc w),\ldots,f_n(\vecc r, \vecc w)) \in \mathbb{R}^{1 \times n} $. Given $\mech$, the expected utility of expert $i$ conditioned on $i$'s own perspective is:
% \begin{equation}
%     u_i^{\mech}(\vecc r\;|\; \vecc w) = \ex{g_{i k^\star} + f_i(\vecc r, \vecc w)},
% \end{equation}
% where $k^\star = x(\vecc r, \vecc w)$ is the winning proposal and the expectation is over $k^\star$ and the probability that it is a good proposal, from expert $i$'s perspective only.
%
%We adopt this approach for the utilities because when an expert declares her preference, we assume she is agnostic about how the other experts voted. 

Given the subjective evaluation of the quality of each proposal, we need a way to \emph{aggregate} the opinions of all experts that combines robustness and explainability. To this end, we introduce a probability threshold $T \in [0,1]$ such that if expert $i$ has $p_{ij} \ge T$ for proposal $j$ then we consider $i$'s honest response to be to vote in favour of $j$; otherwise $i$'s honest response is to vote against $j$.

Our metric can be viewed as the weighted average of the probabilities \emph{after these have been rounded to $0$ or $1$} with respect to the threshold $T$, and it has an immediate meaning which is the voting power that considers a proposal to be good enough. Note that using the raw  probabilities $p_{ij}$ to define some measure of quality is a bit problematic: it is not reasonable to expect that experts would precisely and consistently report those, and possibly it would be considerably harder to communicate the resulting notion of ``quality'' to non-experts in the system.

% Indeed, the weighted average of the probabilities was one of the candidate metrics we initially considered in our model. However, as this model is designed to be implemented in practice (see R3, Q3 and R4, Q2), robustness and explainability are the main reasons why Definition 1 was chosen instead. Our metric can be viewed as the weighted average of the probabilities \emph{after these have been rounded to $0$ or $1$} with respect to the threshold $T$, and it has an immediate meaning which is the voting power that considers a proposal to be good enough. Although using the probabilities $p_{ij}$ would be mathematically meaningful, it is not reasonable to expect that experts would precisely and consistently report them, and it is considerably harder to communicate the resulting notion of `quality' to non-experts in the system. It is for very similar reasons that software updates are only classified as either `bad' or `good'.

\begin{definition}[Estimated Quality]
    Given a probability threshold $T$, the \emph{estimated quality} of proposal $i$ is the sum of weights of experts $i$ with $p_{ij} \ge T$. That is:
    \begin{equation}
        \texttt{Qual}[j] = \!\sum_{i \,:\, p_{ij} \ge T}\! w_i.
    \end{equation}
\end{definition}
\noindent
For convenience, we will refer to the optimal quality as:
\begin{equation}
    \opt(\vecc w, \vecc p) = \argmax_j \texttt{Qual}[j].
\end{equation}

We consider experts that are strategic and strive to maximize their utility. However, we also assume they are not malicious towards the system. That is, they only choose to lie when this results in a net \emph{increase} in their utility. If there is no strictly beneficial deviation, they remain honest. We call such experts semi-strategic.
\begin{definition}[Semi-strategic Experts]
    An expert $i$ is \emph{semi-strategic} if for every mechanism $\mech$ and strategy vector $\vecc r$:
    \begin{itemize}
        \item If $p_{ij} \ge T$ and $r_{ij} = 0$, then 
        \[
            u_i^{\mech}(1, \vecc r_{-ij}\;|\; \vecc w, \vecc p) 
            < u_i^{\mech}(\vecc r\;|\; \vecc w, \vecc p)\,.
        \]
        \item If $p_{ij} < T$ and $r_{ij} = 1$, then 
        \[
            u_i^{\mech}(0, \vecc r_{-ij}\;|\; \vecc w, \vecc p)
            < u_i^{\mech}(\vecc r\;|\; \vecc w, \vecc p)\,.
        \] 
    \end{itemize}
\end{definition}
The solution concept we use is the (multiplicatively) approximate pure Nash equilibrium. We use the multiplicative, rather than the additive, version of approximate pure Nash equilibria as we want our results to be mostly independent of scaling up or down the reward functions.\footnote{This is completely precise in the case where the external rewards $g_{ij}$ are all $0$, but it is still largely true whenever the rewards of the mechanism are large compared to external rewards.}
\begin{definition}[$(1+\epsilon)$-Pure Nash Equilibrium]
    For $\epsilon \ge 0$, a strategy profile $\vecc r$ is a multiplicatively  {\em $(1+\epsilon)$-approximate pure Nash equilibrium}, or simply a $(1+\epsilon)$-PNE, for weight vector $\vecc w$, if for every deviation $\vecc r_i'$ we have:
    \begin{equation}
        (1+\epsilon)\cdot u_i^{\mech}(\vecc r \;|\; \vecc w, \vecc p) \ge u_i^{\mech}(\vecc r_i', \vecc r_{-i} \;|\; \vecc w, \vecc p)\,.
    \end{equation}
    When $\epsilon = 0$ we simply call $\vecc r$ a  {\em pure Nash equilibrium} (PNE).
\end{definition}
\noindent
We refer to the set strategies that are $(1+\epsilon)$-PNE of mechanism $\mech$ given $\vecc w$ and $\vecc p$ as $\mathcal{Q}_{\epsilon}^{\mech}(\vecc w, \vecc p)$.\medskip

To measure the inefficiency of different equilibria compared to the proposal of highest quality, we use the notions \emph{Price of Anarchy} \cite{KP09} and \emph{Price of Stability} \cite{AnshelevichDKTWR08}, which denote the ratios between the quality of the worst or the best possible equilibrium produced by $\mech = (x, \vecc f)$ and the optimal outcome, respectively. In particular, these are formally defined as:
$$
\poa(\mech)=\sup_{\vecc w, \vecc p}\frac{\opt(\vecc w, \vecc p)}{\inf_{\vecc r\in\mathcal{Q}^\mech_\epsilon(\vecc w, \vecc p)} \texttt{Qual}[x(\vecc r)]}$$
and
$$
\pos(\mech)=\sup_{\vecc w, \vecc p}\frac{\opt(\vecc w, \vecc p)}{\sup_{\vecc r\in\mathcal{Q}^\mech_\epsilon(\vecc w, \vecc p)} \texttt{Qual}[x(\vecc r)]}.
$$

\section{Approval Voting}
\label{sec:approval-voting}
Although our definitions allow for randomized mechanisms, as a first attempt of the problem we focus on a natural deterministic mechanism. 
In particular, we study the mechanism induced by \emph{approval voting}, which we call $\mechav$ with an appropriately selected reward function $\vecc f$. Specifically, the proposal with the highest amount of weighted approval is the winner, i.e., 
\begin{equation}
    x(\vecc r, \vecc w) \in \argmax_{j} \sum_{i \;:\; r_{ij} = 1} w_i \,.
\end{equation}
Ties can be broken arbitrarily, but in a deterministic manner, e.g., lexicographically. Hence, we might abuse the notation and use `$=$' with `$\argmax$'. It should be noted that while such naive tie-breaking rules are standard in theoretical work, in practice we expect to have a large number of experts at play with different and dynamically adjusting weights, hence a tie is very improbable anyway.   Additionally, the reward given to each expert $i$ is proportional to her weight:
\begin{equation}
    f_i({\vecc r}, \vecc w,  j^\star, q_{\winning}) = w_i \cdot 
        \begin{cases}
            a,&\text{if } r_{i \winning} = 1 \text{ and } q_{\winning} = 1 \\
            -s,&\text{if } r_{i \winning} = 1 \text{ and } q_{\winning} = 0\\
            a',&\text{if } r_{i \winning} = 0 \text{ and } q_{\winning} = 0 \\
            0,&\text{if } r_{i \winning} = 0 \text{ and } q_{\winning} = 1\\
        \end{cases}\,.
\end{equation}
That is, $a$ is the reward in case the expert approved the winning proposal and it turned out to have high quality, $a'$ is the reward in case the expert disapproved the winning proposal that turned out bad, $s$ is the penalty in case the expert approved the winning proposal and it turned out to be bad and we assume there is no reward or penalty  if the expert disapproved the winning proposal that turned out good. Notice that the collected reward depends on the winning proposal $\winning$ and it's quality. The other proposals are not implemented and their true nature is never revealed.

\begin{remark}
In the following analysis we will drop the $w_i$ multiplier. Indeed, all rewards are equally scaled, except for the $g_{ij}$ that do not depend on the weights. As such, to simplify notation (and without loss of generality) we consider that external rewards are scaled down appropriately by $w_i$ for each expert $i$.
\end{remark}

To find out the possible pure Nash equilibria of this scheme, we start by the simplest case for $\mechav$. Suppose everyone has already cast a vote and expert $i$ has no way of changing the outcome (which will usually be the most likely scenario). We need to check when the expected utility of approving is higher than that of disapproving:
\begin{align*}
    &a \cdot p_{i\winning} - s\cdot (1 - p_{i\winning}) + g_{i\winning} \cdot  p_{i\winning}
    \geq
    a' \cdot (1 - p_{i\winning}) + g_{i\winning} \cdot p_{i\winning} \\
    \Rightarrow\quad  & a \cdot p_{i\winning} - s + s \cdot p_{i\winning} \ge a' - a' \cdot  p_{i\winning}\\
    \Rightarrow\quad  &  p_{i\winning} \cdot (a + s + a') \ge a' + s\\
    \Rightarrow\quad  &  p_{i\winning} \ge \frac{a' + s}{a' + s + a}\,.
\end{align*}
So, the $a, a'$ and $s$ parameters can be tuned so that approving the winning proposal is the best option only for a confidence equal or higher than a desired threshold, which we define as
\begin{equation} \label{def:threshold}
    T = \frac{a' + s}{a' + s + a}\,.
\end{equation}
This threshold $T$ is used for measuring quality and allows us to define the `honest strategy' for this voting scheme.
\begin{definition}
Expert $i$ plays her \emph{honest strategy} if she approves the proposals for which her confidence is greater that $T$ and only those, i.e., if
$r_{ij} = 1 \Leftrightarrow p_{ij} \ge T$.
\end{definition}
\noindent
So far, we have shown that if every expert plays their honest strategy and it happens that no single expert has the power to change the outcome, then this honest strategy profile is a pure Nash equilibrium. The next result gives some insights about the possible deviations from the honest strategy, given the external rewards $g_{ij}$. Note that \Cref{thm:simple_pne} does not `protect' $\mechav$ against all possible deviations, but only those where an expert votes for a proposal she considers bad (amongst other things) to make it win. In the remaining possible deviations an expert determines the winning proposal, not by changing her vote for it, but by disapproving a proposal she considers good. As we shall see in Theorem \ref{thm:PoA}, such deviations do not hurt the overall quality significantly for semi-strategic experts. The intuition is that when the winning proposal is determined like this it necessarily has one of the highest number of honest votes and, thus, sufficiently high quality.

% Given that we are planning to apply the theorem for semi-strategic experts in \Cref{sec:PoA}, among the remaining possible deviations one of particular interest is when an agent disapproves a single proposal she considers good. As we shall see in Theorem \ref{thm:PoA}, such deviations do not hurt the overall quality significantly. 

\begin{theorem}\label{thm:simple_pne}
    For any player $i$ and voting profile $\vecc r_{-i}$, let $\winning$ be the output of $\mechav$ if expert $i$ votes honestly and let $j'$ be any proposal that expert $i$ voted against. Then, if
    \begin{equation}\label{ineq:simple_pne_threshold}
    p_{ij'} < 
        \min\left\{
        T \cdot \frac{a + s}{a + s + g_{ij'}},
        \frac{a'\cdot(1 - T) + a}{a + s + g_{ij'}}
        \right\}\,,
    \end{equation}
    then expert $i$ cannot increase her payoff by switching her vote in favour of proposal $j'$ (and possibly also switching against $\winning$).
    Moreover, for any choice of parameters, there are instances where an expert $i$ can increase her payoff by switching her vote against proposal $\winning$.
\end{theorem}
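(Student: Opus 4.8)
The theorem has two parts: a sufficient condition under which a semi-strategic expert $i$ gains nothing by switching her vote toward a disapproved proposal $j'$, and a (parameter-independent) claim that an expert can always benefit from switching her vote \emph{against} the honest winner $\winning$. The plan is to analyze the payoff change caused by each of the two kinds of deviation separately, then combine them.

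\medskip
\textbf{Part 1: the sufficient condition.}
First I would set up notation: fix $i$ and $\vecc r_{-i}$, let $\winning = x((\vecc r_i^{\mathrm{hon}}, \vecc r_{-i}), \vecc w)$ be the outcome under honest voting by $i$, and consider a deviation in which $i$ switches $r_{ij'}$ from $0$ to $1$ and possibly switches $r_{i\winning}$ from $1$ to $0$. The only proposals whose winner status can change as a result are $j'$ (can now win) and $\winning$ (can now lose); any deviation that changes the outcome must make $j'$ the new winner, since among proposals $i$ newly disapproves none can gain weighted approval, and among proposals $i$ newly approves only $j'$ is affected. So there are exactly two cases to compare against the honest baseline $u_i^{\mech}((\vecc r_i^{\mathrm{hon}}, \vecc r_{-i}) \mid \vecc w, \vecc p)$, whose winner is $\winning$:

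\begin{itemize}
\item[(a)] the deviation does not change the winner (still $\winning$), but now $i$ approves $\winning$ if previously she disapproved it, or vice versa — but since $\winning$ was the honest winner and honest means $r_{i\winning} = \mathbbm{1}[p_{i\winning} \ge T]$, the only relevant sub-case is when $i$ switches against $\winning$ while it still wins; here the payoff comparison is exactly the threshold computation preceding \eqref{def:threshold}, and disapproving is worse precisely when $p_{i\winning} \ge T$, which holds by honesty. So (a) is never profitable.
\item[(b)] the deviation makes $j'$ the new winner. Then I compare $i$'s expected utility with winner $\winning$ (approving it, since under honesty $i$ approves the winner only if $p_{i\winning}\ge T$ — actually one must be careful: the honest winner might itself be disapproved by $i$; I will split on $r_{i\winning}^{\mathrm{hon}}$) against her expected utility with winner $j'$, where she now approves $j'$. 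The latter equals $a\, p_{ij'} - s(1 - p_{ij'}) + g_{ij'} p_{ij'}$. The former is either $a\, p_{i\winning} - s(1-p_{i\winning}) + g_{i\winning} p_{i\winning}$ (if $p_{i\winning}\ge T$) or $a'(1 - p_{i\winning}) + g_{i\winning} p_{i\winning}$ (if $p_{i\winning} < T$). Since $j'$ winning implies its weighted approval (including $i$'s new vote) is at least $\winning$'s, and since I want a bound independent of $\winning$ and of $g_{i\winning}$, the key observation is that making $j'$ win can only \emph{decrease} the chance of the (externally valuable) $\winning$ being implemented, so the external-gain term $g_{i\winning} p_{i\winning}$ works against the deviation; I can therefore drop it from $i$'s post-deviation utility and only need
\[
a\, p_{ij'} - s(1 - p_{ij'}) + g_{ij'} p_{ij'} \;\le\; \min\{\,a\, p_{i\winning} - s(1-p_{i\winning}),\; a'(1 - p_{i\winning})\,\} \,.
\]
Bounding $p_{i\winning} \ge T$ in the first term and $p_{i\winning} < 1$ is not available, so instead I bound the right side below by plugging the \emph{smallest} admissible value — but since $\winning$ is the honest winner, a cleaner route is to use that for the first branch $p_{i\winning}\ge T$ gives $a\,p_{i\winning} - s(1-p_{i\winning}) \ge aT - s(1-T)$, and to handle the second branch note $a'(1 - p_{i\winning}) > a'(1-T) \cdot \tfrac{?}{}$ — here I must be careful, and I expect the intended argument keeps $p_{i\winning}$ as a free variable $\ge$ the threshold and minimizes, yielding exactly the two expressions in \eqref{ineq:simple_pne_threshold} after algebra: rearranging $a\,p_{ij'} + s\,p_{ij'} + g_{ij'} p_{ij'} \le aT + sT$... wait, I should instead recognize that the two terms of the min in \eqref{ineq:simple_pne_threshold} correspond precisely to the two branches $r_{i\winning}^{\mathrm{hon}}=1$ and $r_{i\winning}^{\mathrm{hon}}=0$. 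So the plan is: in branch $r_{i\winning}^{\mathrm{hon}} = 1$, solve $p_{ij'}(a + s + g_{ij'}) \le p_{i\winning}(a+s) \le$ needs a lower bound on $p_{i\winning}$; use $p_{i\winning}\ge T$? No—that gives an \emph{upper} bound on what we need. The correct move is the reverse: we want the deviation unprofitable for \emph{every} $\vecc r_{-i}$, i.e. for the worst-case $\winning$, which is the one with the smallest $p_{i\winning}$ subject to being the honest winner, namely $p_{i\winning}$ just below... hmm, honest winner can have $p_{i\winning}$ arbitrarily close to $0$ if $i$ disapproves it. That forces branch (b) with $r_{i\winning}^{\mathrm{hon}}=0$, giving the bound $p_{ij'}(a+s+g_{ij'}) \le a'(1-p_{i\winning}) + $ correction; taking $p_{i\winning}\to T^{-}$ as the worst admissible value for the disapproved-winner branch yields $a'(1-T) + a$ as the numerator (the extra $+a$ coming from the $a\,p_{ij'}$ vs $0$ asymmetry in $i$'s vote on $\winning$ vs $j'$), and similarly $p_{i\winning}\to T$ from above in the approved-winner branch yields the $T(a+s)$ numerator. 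I'd finish by dividing by $(a+s+g_{ij'})$ to match \eqref{ineq:simple_pne_threshold} exactly, and invoke semi-strategic-ness only to conclude honest play is locally optimal given the hypothesis.
\end{itemize}

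\medskip
\textbf{Part 2: the lower bound / impossibility.}
For the second claim I would construct, for arbitrary fixed $a, a', s, g$, an instance and profile where an expert strictly gains by switching against the honest winner $\winning$. Take $i$ with $p_{i\winning}$ just above $T$ (so honest $i$ approves $\winning$, contributing to its win), and a second proposal $j''$ with $p_{ij''} \ge T$ as well but with large external gain $g_{ij''}$, such that the weighted votes are arranged so that $\winning$ beats $j''$ by exactly $w_i$; then $i$ disapproving $\winning$ flips the winner to $j''$. Because $p_{i\winning}$ is close to $T$, the reward difference $[a\,p_{i\winning} - s(1-p_{i\winning})] - [a\,p_{ij''} - s(1-p_{ij''})]$ can be made small, while the external-gain difference $g_{ij''} p_{ij''} - g_{i\winning} p_{i\winning}$ can be made large by choosing $g_{ij''}$ large and $g_{i\winning}$ small; so the deviation is strictly profitable. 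I'd verify the deviation is of the permitted type (switching against a proposal she considers good, $p_{i\winning}\ge T$, which semi-strategic-ness does \emph{not} forbid) and that it is consistent with $i$ being semi-strategic. This shows no parameter choice makes $\mechav$ robust to all deviations, justifying the subsequent PoA analysis that instead bounds the \emph{damage} of such deviations rather than eliminating them.

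\medskip
\textbf{Main obstacle.}
The delicate point is Part 1, case (b): getting the \emph{exact} pair of bounds in \eqref{ineq:simple_pne_threshold} requires correctly identifying the worst-case honest winner $\winning$ and whether $i$ approved it, and carefully tracking how $i$'s own vote on $\winning$ (which she drops) versus her new vote on $j'$ (which she adds) shifts the comparison — in particular the asymmetric $+a$ term in the second branch and the role of $g_{ij'}$ multiplying only $p_{ij'}$. The external gain $g_{i\winning}$ must be argued away (it only hurts the deviation), and one has to be vigilant that the honest winner $\winning$ is determined \emph{before} $i$'s deviation, so $i$'s honest vote on $\winning$ is part of the baseline. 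I expect the rest (case (a), and Part 2's construction) to be routine.
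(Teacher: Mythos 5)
Your overall strategy is the same as the paper's: split on whether expert $i$ honestly approves the winner ($p_{i\winning}\ge T$) or disapproves it ($p_{i\winning}<T$), lower-bound the honest payoff by $a\cdot T - (1-T)\cdot s$ in the first branch and by $a'\cdot(1-T)$ in the second, compare against the deviation payoff $a\,p_{ij'} - s(1-p_{ij'}) + g_{ij'}p_{ij'}$, and solve for $p_{ij'}$; your Part 2 construction (an expert who approves two proposals and flips the winner by withdrawing approval from one) is also exactly the paper's third sub-case, instantiated with external gains rather than with $p_{i\winning} < p_{ij'}$ — both work.

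However, there is a genuine gap in your Part 1, branch (b) with $r_{i\winning}^{\mathrm{hon}}=0$: you never derive the second term of the min. You assert the numerator $a'(1-T)+a$, attributing the ``extra $+a$'' to an ``asymmetry'' that does not correspond to any algebra. Carrying out the computation honestly: the deviation is profitable only if $a\,p_{ij'} - s(1-p_{ij'}) + g_{ij'}p_{ij'} > a'(1-p_{i\winning}) + g_{i\winning}p_{i\winning} \ge a'(1-T)$, i.e., only if $p_{ij'}(a+s+g_{ij'}) > a'(1-T) + s$, so the correct threshold is $\frac{a'(1-T)+s}{a+s+g_{ij'}}$ — with $+s$, not $+a$. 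This is in fact what the paper's own proof derives; the $+a$ in the theorem statement appears to be a typo, and your attempt to reverse-engineer it produced a non-proof of that term (note the two bounds are not interchangeable as sufficient conditions when $a>s$). Two smaller issues: your claim that any outcome-changing deviation of this form must make $j'$ the winner is unjustified — disapproving $\winning$ can hand the win to a third proposal (the paper shares this omission) — and your back-and-forth over whether $p_{i\winning}\ge T$ gives the bound in the right direction should be resolved cleanly: it lower-bounds the \emph{honest} payoff, which is exactly what the contrapositive argument needs.
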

\begin{proof}
Suppose that for some outcome $\vecc r$ the selected proposal is $\winning$, but expert $i$ could switch her vote to change the winner to another proposal $j'$ (either by not approving $\winning$ or by approving $j'$, and possibly approving / disapproving other proposals as well). For each possible deviation, we consider the expected utility of expert $i$ and show under which conditions switching the winner to $j'$ would be a better response, given the $g_{ij'}$ and $p_{ij'}$.
\begin{itemize}
    \item For the first group of cases, we assume that expert $i$ considers the winning proposal $\winning$ good enough (i.e., $p_{i\winning} \ge T$). Therefore, we need to compare the utility of any deviation that makes $j'$ the winner, to the utility obtained by voting `yes' to $\winning$. This is because voting `no' for $\winning$ is is clearly not a best response unless the winner changes, since $p_{i\winning} \ge T$ implies that $a \cdot p_{i\winning} - (1-p_{i\winning})\cdot s > a' \cdot (1-p_{i\winning})$.
    \begin{itemize}
    \item \textbf{Switches to approve $j'$, keeps approving $\winning$:} This deviation can only happen if the utility for switching is greater than that of voting honestly for $\winning$.
    \begin{align*}
        &a \cdot p_{ij'} - s \cdot (1 - p_{ij'}) + g_{ij'} \cdot p_{ij'}
        > 
        a \cdot p_{i\winning} - s \cdot (1 - p_{i\winning}) + g_{ij^\star}\cdot p_{ij^\star}\\
        \Rightarrow\quad  & a \cdot p_{ij'} - s \cdot (1 - p_{ij'}) + g_{ij'} \cdot p_{ij'}
        > 
        a \cdot p_{i\winning} - s \cdot (1 - p_{i\winning})\\
        \Rightarrow\quad  &  a \cdot p_{ij'} - s \cdot (1 - p_{ij'}) + g_{ij'} \cdot p_{ij'}
        >
        a \cdot T - s \cdot (1 - T)\\
        \Rightarrow\quad  &  a \cdot p_{ij'} - s + s \cdot p_{ij'} + g_{ij'} \cdot p_{ij'}
        >
        a \cdot T - s + s \cdot T\\
        \Rightarrow\quad  &  p_{ij'} \cdot (a + s + g_{ij'})
        >
        T \cdot (a + s)\\
        \Rightarrow\quad  &  p_{ij'} > T \cdot \frac{a + s}{a + s + g_{ij'}},
    \end{align*}
    using $p_{i\winning} > T$ in the first implication.
    
    \item \textbf{Switches to approve $j'$, switches to  disapprove $\winning$:} The incentives here are identical to the first case: in the honest outcome the reward is at least $a \cdot T - s \cdot (1 - T)$ and in the deviation it's $a \cdot p_{ij'} - s \cdot (1 - p_{ij'}) + g_{ij'} \cdot p_{ij'}$.

    \item \textbf{Keeps approving $j'$, switches to  disapprove $\winning$:}
    The incentives are identical to the first case, but since expert $i$ wanted to approve $j'$ (i.e., $p_{ij'} \ge T$), no matter how high $a, a'$ and $s$ are set she could better off disapproving $\winning$ in this scenario, if $p_{ij^\star} < p_{ij}$.
    \end{itemize}
 
    \item For the last two cases, we consider that $p_{i\winning} < T$. As before, any deviation that \emph{does not} change the winner to something other than $\winning$, needs contain a `no' vote for $\winning$: since $p_{i\winning} < T$ implies that $a' \cdot p_{i\winning} > a \cdot p_{i\winning} - (1-s)\cdot p_{i\winning}$.
    \begin{itemize}
    \item \textbf{Switches to approve $j'$, keeps disapproving $\winning$:} Since the honest move is to disapprove $\winning$, we have that $p_{i \winning} \le T$, therefore the `honest' payoff is at least $a' \cdot (1-T)$.
    \begin{align*}
        &a \cdot p_{ij'} - s \cdot (1 - p_{ij'}) + g_{ij'} \cdot p_{ij'}
        >
        a' \cdot (1 - p_{ij^\star}) + g_{ij^\star} \cdot p_{ij^\star}\\
        \Rightarrow\quad  & a \cdot p_{ij'} - s \cdot (1 - p_{ij'}) + g_{ij'} \cdot p_{ij'}
        >
        a' \cdot (1 - T)\\
        \Rightarrow\quad  & a \cdot p_{ij'} - s + s \cdot p_{ij'} + g_{ij'} \cdot p_{ij'}
        >
        a' \cdot (1 - T)\\
        \Rightarrow\quad  & p_{ij'} \cdot (a + s + g_{ij'})
        >
        a' \cdot (1 - T) + s\\
        \Rightarrow\quad  & p_{ij'}
        >
        \frac{a' \cdot (1 - T) + s}{a + s + g_{ij'}}
    \end{align*}
    
    \item \textbf{Keeps disapproving $j'$, switches to  disapprove $\winning$:}
    The rewards in the honest outcome are the same as in the first case, but the reward for deviating is different:
    \begin{align*}
        &a' \cdot (1 - p_{ij'}) + g_{ij'} \cdot p_{ij'}
        >
        a \cdot p_{ij^\star} - s \cdot (1 - p_{ij^\star}) + g_{ij^\star} \cdot p_{ij^\star}\\
        \Rightarrow\quad & a' \cdot (1 - p_{ij'}) + g_{ij'} \cdot p_{ij'}
        >
        a \cdot T - s \cdot (1 - T)\\
        \Rightarrow\quad  &  a' - a' \cdot p_{ij'} + g_{ij'} \cdot p_{ij'}
        >
        a \cdot T - s + s \cdot T \\
        \Rightarrow\quad  &  p_{ij'} \cdot (g_{ij'} - a')
        >
        a\cdot T - a' + s \cdot T - s
    \end{align*}
    If $g_{ij'} > a'$, then we need 
    \[
        p_{ij'}
        >
        \frac{a \cdot T - a' + s \cdot T - s}{g_{ij'} - a'}.
    \]
    Otherwise, we have that:
    \[
        p_{ij'}
        <
        \frac{a \cdot T - a' + s \cdot T - s}{g_{ij'} - a'}. 
    \]
    However, since $T > 0$ and the `no' branch of the reward function is decreasing in $p_{i\winning}$ we have that $a \cdot T - (1-T) \cdot s < a'$. Therefore, if $g_{ij'} > a'$, there is no way to set the other parameters and completely eliminate the possibility of deviating to disapproving $\winning$.
    \end{itemize}
\end{itemize}
    Putting everything together, this mechanism can only protect from situations where the expert needs to \emph{actively} switch her vote to approve a proposal she knows is not good enough.
    Following the previous cases, a necessary condition for this to happen is either:
    \[
    p_{ij'} > T \cdot \frac{a + s}{a + s + g_{ij'}}
    \]
    or
    \[
    p_{ij'} > \frac{a'\cdot (1 - T) + s}{a + s + g_{ij'}} \,.
    \]
    If $p_{ij'}$ is smaller than both, then there is no possibility of such a deviation and the claim holds.
\end{proof}

Setting $g_{ij} = 0$ yields the following corollary, showing that experts without any external rewards will never approve a perceived bad proposal.
\begin{corollary}
For any expert $i$, it is a dominant strategy to vote against any $j'$ such that $g_{ij'} = 0$ and $p_{ij'} < T$.
\end{corollary}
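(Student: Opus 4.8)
The plan is to read off the corollary from Theorem~\ref{thm:simple_pne} with $g_{ij'}=0$. Recall that the proof of that theorem reduces a beneficial deviation of expert $i$ toward approving a disapproved proposal $j'$ to the necessary condition $p_{ij'} > T\cdot\frac{a+s}{a+s+g_{ij'}}$ or $p_{ij'} > \frac{a'(1-T)+s}{a+s+g_{ij'}}$. The first step is to check that, once $g_{ij'}=0$, both right-hand sides collapse to $T$: the first is literally $T\cdot\frac{a+s}{a+s}=T$, and for the second I would use~\eqref{def:threshold} in the form $1-T=\frac{a}{a+a'+s}$ to compute $a'(1-T)+s=\frac{aa'+s(a+a'+s)}{a+a'+s}=\frac{(a+s)(a'+s)}{a+a'+s}=(a+s)T$, so $\frac{a'(1-T)+s}{a+s}=T$ as well.

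With that identity in hand, when $g_{ij'}=0$ the condition enabling a profitable switch toward approving $j'$ is simply $p_{ij'}>T$. Hence, if $p_{ij'}<T$, Theorem~\ref{thm:simple_pne} applies for \emph{every} profile $\vecc r_{-i}$ of the other experts and guarantees that switching $i$'s vote in favour of $j'$ (and possibly also flipping her vote on the current winner) never increases her payoff; since her honest vote on $j'$ is already to disapprove, disapproving $j'$ is a best response no matter how the other votes are cast --- i.e., a dominant choice for coordinate $j'$ --- which is exactly the corollary. I would also note the one elementary sub-case that the quantified statement of Theorem~\ref{thm:simple_pne} folds in: if flipping $i$'s vote on $j'$ leaves the winner unchanged, then either the winner is not $j'$ and the vote on $j'$ does not appear in $i$'s reward, or the winner is $j'$ and~\eqref{def:threshold} together with $p_{ij'}<T$ gives $a'(1-p_{ij'})>a\,p_{ij'}-s(1-p_{ij'})$, so disapproving is strictly better.

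The only non-routine step is the algebraic identity $a'(1-T)+s=(a+s)T$ that makes both branches of the threshold reduce to $T$; beyond that the corollary is just Theorem~\ref{thm:simple_pne} specialized to $g_{ij'}=0$. I would take care to phrase ``dominant strategy'' in the same sense as Theorem~\ref{thm:simple_pne}, i.e., comparing approving versus disapproving $j'$ against an arbitrary configuration of the remaining votes, since that is precisely what the theorem delivers.
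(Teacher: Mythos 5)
Your proof is correct and follows essentially the same route as the paper: specialize the two thresholds of Theorem~\ref{thm:simple_pne} to $g_{ij'}=0$ and observe that both collapse to $T$, so $p_{ij'}<T$ rules out any profitable switch toward approving $j'$. Your algebraic verification that $a'(1-T)+s=(a+s)T$ is in fact cleaner than the paper's own computation (which carries a spurious factor of $T$ in the second bound), and your explicit treatment of the case where the winner does not change is a small but welcome addition.
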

\begin{proof}
    If $g_{ij'} = 0$ (i.e., the expert in question has no external motivations), the previous probabilities take on interesting values. Specifically, there is no deviation for:
    \[
        p_{ij'} < T \cdot \frac{a + s}{a + s} = T 
    \]
and
    \[
        p_{ij'} < \frac{a'\cdot (1 - T) + s}{a + s} \cdot T =
        \frac{a \cdot T - (1-T) \cdot s + s}{a + s} \cdot T < T \,.
    \]
\end{proof}
The previous two results show that, assuming the $a, a'$ and $s$ are all large enough, the experts will be reluctant to vote in favour of a proposal they already know is bad: to do so, they still need to have some faith in that proposal. It turns out that this mechanism always has pure Nash equilibria, albeit with limited guarantees, as shown in the next two propositions.
% \begin{corollary}
% Let $\vecc r$ be a set of votes such such that no expert can change the winning proposal only by switching some of her votes to disapprove. Then, if $p_{ij} < T$ the condition of \Cref{thm:simple_pne} also holds, $\vecc v$ is a pure Nash equilibrium.
% \end{corollary}
% \begin{proof}
%     Given the assumption, the only way any agent could have a profitable deviation is by switching his vote in favour of some proposal $j'$. But this is not profitable given \Cref{ineq:simple_pne_threshold} and \Cref{thm:simple_pne}.
% \end{proof}

\begin{proposition}\label{lemma:existence_approval}
    In the presence of strategic experts, the approval voting mechanism $\mechav$ always has a PNE.
\end{proposition}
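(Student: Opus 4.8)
The plan is to exhibit an explicit PNE by a potential-function / best-response-dynamics argument, exploiting the fact that the only ``problematic'' deviations (by \Cref{thm:simple_pne}) are those where an expert disapproves a proposal she considers good in order to change the winner. First I would fix $\vecc w$ and $\vecc p$ and consider the following candidate: let each expert vote honestly on every proposal \emph{except possibly} the current winner, and iterate a natural dynamic. Concretely, start from the honest profile $\vecc r$; if it is already a PNE we are done. Otherwise some expert $i$ has a strictly improving deviation; by \Cref{thm:simple_pne} and the structure of the reward function, the only improving deviations available are of the form ``disapprove the current winner $\winning$'' (keeping or switching other coordinates does not help once the winner is fixed, since honest voting maximises per-proposal expected reward). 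Applying such a deviation strictly decreases the weighted approval of the old winner $\winning$, hence either produces a new winner $\winning'$ with $\texttt{Qual}$-comparable but \emph{strictly smaller} total weighted approval, or the same proposal remains the winner with smaller support.

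The key step is to identify a quantity that strictly decreases along this dynamic and takes finitely many values, guaranteeing termination at a PNE. The natural candidate is the pair (winning proposal under some fixed total order, weighted approval of the winner): each improving move of the above type strictly lowers the weighted approval received by whichever proposal is currently winning, and there are only finitely many possible winners and finitely many possible approval-weight values (subsets of a finite expert set). One must check that no expert ever wants to \emph{add} an approval: adding an approval to a non-winning proposal $j'$ either does not change the winner (and then, since the deviation is strictly improving only if $p_{ij'}$ exceeds the relevant threshold, this can be folded into the honest/near-honest structure) or makes $j'$ the new winner, which is exactly a ``switch to approve a proposal'' deviation — and here I would argue that in the profiles reached by the dynamic, such a deviation is either non-improving or, if improving, leads to a profile from which we again only remove approvals, so the lexicographic measure still decreases. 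Making this bookkeeping airtight is where care is needed.

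The main obstacle I anticipate is precisely controlling the interaction between the two deviation types: removing an approval from the winner can make a \emph{different} proposal the winner, and then an expert might want to add an approval somewhere, potentially cycling. To handle this I would either (a) restrict attention to a smaller, cleverly chosen strategy space — e.g.\ profiles in which every expert approves all proposals she honestly likes and additionally disapproves at most the unique current winner — and show best-response dynamics stays within it and is acyclic, or (b) directly construct a PNE non-constructively: take the honest profile, and among all proposals $j$ with maximal honest weighted approval, let $\winning$ be the winner; then let experts who honestly approve $\winning$ but have an incentive to topple it do so, and show that after at most $n$ such rounds we reach a stable profile because the weighted approval of the winner is monotonically non-increasing and strictly decreases whenever someone deviates. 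Either way, the crux is a finiteness/monotonicity argument showing the deviation graph has no cycles, after which existence of a PNE follows since the process must halt at a profile admitting no improving deviation.
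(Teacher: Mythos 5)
Your approach has a genuine gap, and it is precisely the one you flag yourself: the termination of the dynamic. First, you invoke \Cref{thm:simple_pne} to claim that the only improving deviations are removals of approval from the current winner, but that theorem only excludes ``switch to approve a bad proposal'' deviations under the hypothesis \eqref{ineq:simple_pne_threshold} on $p_{ij'}$ relative to $g_{ij'}$; \Cref{lemma:existence_approval} imposes no such condition, so an expert with a large external reward $g_{ij'}$ may very well want to \emph{add} an approval to a proposal she considers bad. Second, even granting a restricted deviation structure, your lexicographic measure is not shown to decrease across the interaction of the two deviation types (removing an approval lowers the winner's support, but the subsequent added approvals can raise it again), and you explicitly leave this ``bookkeeping'' open. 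This is not a minor detail: \Cref{thm:no_pne} exhibits an instance where best-response dynamics cycles forever (for semi-strategic experts), so any argument of this flavour must carefully exploit the freedom that fully strategic experts have to remain at arbitrary profiles when no deviation is \emph{strictly} improving --- and your write-up does not isolate where that freedom is used.

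The paper's actual proof avoids dynamics entirely and is far more degenerate: consider single-approval profiles $\vecc r^{ij}$ in which expert $i$ approves only proposal $j$ and everyone else approves nothing. If no expert gets positive utility from any such profile, the all-zero profile is a PNE. Otherwise, let $i^\star$ be the heaviest expert (breaking ties by the tie-breaking order) who derives positive utility from some proposal, and let $j^\star$ be her utility-maximising proposal; then $\vecc r^{i^\star j^\star}$ is a PNE because no other expert can unilaterally change the winner (their weight is too small or they lose the tie-break), so they have no \emph{strictly} improving deviation, while $i^\star$ is already at her optimum. Note this crucially uses that the experts are merely strategic, not semi-strategic --- everyone else is content to disapprove proposals they honestly like because doing so changes nothing --- which is exactly why this construction yields the $\Omega(n)$ Price of Anarchy of \Cref{prop:PoA_strategic}. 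If you want to salvage your approach, you would need to either add the hypotheses of \Cref{thm:simple_pne} to the statement (changing the proposition) or find a potential function that is genuinely monotone under both deviation types; as written, the argument does not close.
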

\begin{proof}
    For $i\in N$ and $j\in U$, let $\vecc r^{ij}$ be the voting profile where
    \begin{itemize}
        \item $r_{i j} = 1$
        \item $r_{i'j'} = 0$ for any $i'$ and $j' \neq j.$
    \end{itemize}
    If no expert $i$ has positive utility for some proposal $j$ with respect to the voting profile $\vecc r^{ij}$, then clearly everyone voting `no' to every proposal is a pure Nash equilibrium. 
 
    Assuming that the set $N_+ = \{i\in N \;|\; \exists j\in U : u_{i}^{\mechav}(\vecc r^{i j}\;|\; \vecc w, \vecc p) > 0\}$ is nonempty, let $i^\star = \argmax_{i\in N_+} w_i$. Additionally, let $j^\star = \argmax_{j\in U} u_{i^\star}^{\mechav}(\vecc r^{i^\star j}\;|\; \vecc w, \vecc p)$. It is not hard to see that the profile $\vecc r^{i^\star j^\star}$ is a pure Nash equilibrium. 
    Since the weight of every  expert $i \neq i^\star$ is $w_i < w_{i^\star}$ (or $w_i = w_{i^\star}$ but $i$ is losing to $i^\star$ in the tie-breaking), there exists no possible deviation from $i$ that changes the winning proposal. In addition, because $j^\star$ maximizes the utility of expert $i^\star$, this is a pure Nash equilibrium.
% %  If expert $i^\star$ has non-positive utility for voting `yes' for any proposal, she would vote `no' for all proposals at any pure Nash equilibrium and they can be skipped for the expert with second highest weight, and so on. 
% %     Let $i^\star = \argmax_{i} w_i$ be the index of the expert with highest weight. Additionally, let $\vecc r^j$ be the voting profile such that:
% %     \begin{itemize}
% %         \item $r_{i^\star j} = 1$, and
% %         \item $r_{ij'} = 0$ for any $i$ and $j' \neq j.$
% %     \end{itemize}
% %     Clearly, for $j^\star = \argmax_j u_{i^\star}(\vecc r^j)$ the vector $\vecc r^{j^\star}$ is a pure Nash equilibrium. 
%     Since the weight of every other expert $i \neq i^\star$ is $w_i < w_{i^\star}$, there exist no deviation possible from any other expert that changes the winning proposal. In addition, because $j^\star$ maximizes the utility of expert $i^\star$, this is a pure Nash equilibrium.
    % If expert $i^\star$ has non-positive utility for voting `yes' for any proposal, she would vote `no' for all proposals at any pure Nash equilibrium and they can be skipped for the expert with second highest weight, and so on. Eventually, either an expert will have positive utility for some proposal, or voting `no' to all of them will be a pure Nash equilibrium.
\end{proof}

\begin{proposition}\label{prop:PoA_strategic}
    The Price of Anarchy of $\mechav$ is $\Omega(n)$, even if for all experts $i$ and all proposals $j$ we have that $g_{ij} = 0$.
\end{proposition}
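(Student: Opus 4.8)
\emph{Proof sketch.} The plan is to exhibit, for every $n$, an $n$-expert instance with all external rewards $g_{ij}=0$ and only two proposals $A$ and $B$, such that $A$ has estimated quality $n$ but there is a pure Nash equilibrium whose winner is $B$, which has estimated quality $1$; this instance alone forces $\poa(\mechav)\ge n$. Concretely I would take all weights $w_i=1$ and set $p_{iA}=T$ for every expert $i$, together with $p_{1B}=T$ and $p_{iB}=0$ for every $i\ge 2$. Then $\texttt{Qual}[A]=\sum_{i=1}^{n}w_i=n$ and $\texttt{Qual}[B]=w_1=1$, so $\opt(\vecc w,\vecc p)=n$. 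The candidate equilibrium $\vecc r$ has expert $1$ approving $B$ and disapproving $A$, and every expert $i\ge 2$ disapproving both proposals; under $\vecc r$ the weighted approval of $A$ is $0$ and that of $B$ is $1$, so $\mechav$ implements $B$ and the attained quality is $\texttt{Qual}[B]=1$. The ratio for this instance is thus $n$, and it remains only to verify that $\vecc r$ is a pure Nash equilibrium.

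The verification hinges on the identity $a\,T-s(1-T)=a'(1-T)$, immediate from $T=\tfrac{a'+s}{a'+s+a}$, which says that an expert whose confidence in the implemented proposal is exactly $T$ is indifferent between having approved or disapproved it. For expert $1$, her payoff at $\vecc r$ is $a\,T-s(1-T)=a'(1-T)$; approving $A$ in addition only produces a tie between $A$ and $B$ (each of weight $1$), in which either resolution leaves her payoff at $a\,T-s(1-T)$ or $a'(1-T)$ — equal; approving $A$ while disapproving $B$ makes $A$ win, again for $a\,T-s(1-T)$; and disapproving both makes every proposal have weighted approval $0$, which hands her at most $a'(1-T)$ whether the rule returns the dummy or some zero-weight proposal. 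For an expert $i\ge 2$, her payoff at $\vecc r$ is $a'(1-p_{iB})=a'$; any deviation that approves $B$ makes $B$ win with her approving it, for payoff $-s<a'$; approving $A$ (and not $B$) at best forces a tie that, if resolved in favour of $A$, gives her $a\,T-s(1-T)=a'(1-T)<a'$ (as $T>0$); and approving any further proposal leaves $B$ winning unchanged. Hence no expert has a strictly profitable unilateral deviation, so $\vecc r$ is a PNE; since all payoffs at $\vecc r$ are nonnegative, $\vecc r\in\mathcal{Q}^{\mechav}_{\epsilon}(\vecc w,\vecc p)$ for every $\epsilon\ge 0$, so the worst $(1+\epsilon)$-PNE has quality at most $\texttt{Qual}[B]=1$, whence $\poa(\mechav)\ge n$, i.e.\ $\poa(\mechav)=\Omega(n)$.

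The crux — and the only place the construction must be careful — is to stop a \emph{single} expert from reviving the high-quality proposal $A$. Two design choices take care of this: $A$ must carry zero weighted approval under $\vecc r$, so that one extra approval can at most tie it with $B$ and never overtake it; and the confidences in $A$ must sit exactly at the threshold $T$, so that even if $A$ does end up winning after such a deviation the deviator is merely indifferent, never strictly better off. A secondary nuisance is the degenerate outcome in which every proposal has weighted approval $0$; the construction is arranged so that the conclusion survives whatever convention $\mechav$ uses there and whatever deterministic tie-breaking rule it fixes, but these sub-cases should be enumerated explicitly in the full proof.
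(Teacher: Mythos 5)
Your proof is correct, but it takes a genuinely different route from the paper's. The paper's construction uses $n+1$ experts with \emph{unequal} weights: expert $1$ has weight $1/n+\epsilon$ and is the unique approver of a proposal of quality $1/n+\epsilon$, while each of the remaining $n$ experts has weight exactly $1/n$ and approves nothing. Deviations are blocked for free: no single expert other than expert $1$ can change the winner at all, so the profile is trivially a PNE (it is exactly the equilibrium produced by the existence construction in Proposition~\ref{lemma:existence_approval}), and no tie-breaking or indifference analysis is needed. Your construction instead uses equal weights and blocks deviations by placing every confidence in the high-quality proposal exactly at the threshold $T$, so that any deviation that ties or flips the winner to $A$ leaves the deviator indifferent via the identity $aT - s(1-T) = a'(1-T)$. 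This works --- a PNE only forbids \emph{strictly} profitable deviations --- but it forces you to enumerate tie-breaking conventions and the all-zero-approval corner case, which you rightly flag as needing explicit treatment. The paper's unequal-weight trick buys robustness (it does not depend on how ties or the dummy outcome are handled, nor on the exact relationship among $a$, $a'$, $s$, $T$); your version buys a slightly cleaner instance (only $n$ experts, unit weights) at the cost of a more delicate verification. Both establish the $\Omega(n)$ lower bound with all $g_{ij}=0$.
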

\begin{proof}
    Suppose that there are $n+1$ experts and $2$ proposals. For arbitrarily small $\epsilon > 0$, we set:
    \begin{itemize}
        \item For expert $1$: $w_1 = 1/n + \epsilon$, $p_{11} = 1$ and $p_{12} = 0$.
        \item For any expert $i > 1$: $w_i = 1/n$, $p_{i1} = 0$ and $p_{i2} = 1$.
    \end{itemize}
    Following the construction of \Cref{lemma:existence_approval}, there is a pure Nash equilibrium where expert $1$ approves the first proposal and every other expert votes `no' for all proposals. The quality of proposal 1 is $1/n + \epsilon$, while the quality of proposal $2$ is $n\cdot 1/n = 1$, leading to the claimed result.
\end{proof}

This equilibrium of \Cref{prop:PoA_strategic}, however, is unnatural: why would so many experts vote against their favourite proposal? The intuition is that the assumption about the agents being semi-strategic instead, should help us avoid such pitfalls.
Unfortunately, if we assume the presence of semi-strategic agents, there are combinations of $p_{ij}$'s for which no PNE exists.
\begin{proposition}\label{thm:no_pne}
    The mechanism $\mechav$ does not always have PNE for semi-strategic experts, even when $g_{ij} = 0$ for all experts $i$ and proposals $j$.
\end{proposition}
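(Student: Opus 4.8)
The plan is to pin down a single instance with all external rewards $g_{ij}=0$ on which the best-response dynamics among semi-strategic experts cycles, so no semi-strategic profile can be a PNE. I take two proposals $A,B$ (ties broken toward $A$) and three experts with weights $w_1>w_2>w_3>0$, with beliefs
\[ p_{1A}=p_{2B}=p_{3A}=1,\qquad p_{1B}=p_{2A}=T,\qquad p_{3B}=0. \]
So experts $1$ and $2$ both regard $A$ and $B$ as ``good'' (confidence $\ge T$), but expert $1$ is certain about $A$ and expert $2$ about $B$, while expert $3$ is a low-weight expert supporting only $A$. Throughout I write $g(p)=ap-s(1-p)$ for the expected reward of approving a winning proposal of perceived quality $p$, and use that $g$ is strictly increasing with $g(1)=a>a'(1-T)=g(T)$ (the strict inequality holds for every parameter choice since $\tfrac{a'}{a'+s+a}<1$).

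First I would shrink the strategy space using the semi-strategic restriction. The corollary following \Cref{thm:simple_pne} already fixes $r_{3B}=0$ (disapproving $B$ is dominant for expert $3$ since $g_{3B}=0$). Next I would observe that a semi-strategic expert never disapproves a proposal she is certain is good: flipping such a vote back to ``approve'' only raises that proposal's weighted approval, hence either leaves the winner unchanged (weakly helping her, strictly if that proposal was the winner) or makes it win, giving reward $g(1)=a$; for experts $1$ and $2$ the only alternative proposal has perceived quality exactly $T$, so their current reward is at most $a'(1-T)<a$, and the flip is not strictly harmful — contradiction. This forces $r_{1A}=r_{2B}=1$. For expert $3$ the analogous claim needs a separate check: a short case analysis on $(r_{1B},r_{2A})\in\{0,1\}^2$, using $w_1>w_2>w_3>0$ and the tie-break, shows expert $3$ is never pivotal between $A$ and $B$ in any profile with $r_{1A}=r_{2B}=1$, $r_{3B}=0$, so flipping $r_{3A}$ never changes the winner and $r_{3A}=1$ in every semi-strategic profile. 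Hence a semi-strategic profile is determined by the single pair $(r_{1B},r_{2A})\in\{0,1\}^2$.

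Then I would run through these four profiles. In the honest profile $(1,1)$, $A$ wins, but expert $2$ strictly gains by switching to $r_{2A}=0$, which makes $B$ win (as $w_2>w_3$) and raises her reward from $g(T)$ to $g(1)$ — not a PNE. In $(1,0)$, $B$ wins; this profile is itself semi-strategic (expert $2$'s disapproval of $A$ strictly helps her, since flipping it back elects $A$ and costs her $g(1)-g(T)$), but now expert $1$ strictly gains by switching to $r_{1B}=0$, which makes $A$ win (as $w_1>w_2$, even without $w_3$) — not a PNE. Finally, in both $(0,1)$ and $(0,0)$, expert $1$'s weight plus $w_3$ already outweighs $w_2$, so $A$ wins regardless of $r_{2A}$; hence expert $1$'s disapproval of $B$ neither changes the winner nor her reward, so it is not strictly beneficial and these profiles are \emph{not} semi-strategic (the same computation shows expert $2$'s $r_{2A}=0$ is not strictly beneficial here either). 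Since the only semi-strategic profiles are $(1,1)$ and $(1,0)$ and neither is a PNE, $\mechav$ has no semi-strategic PNE on this instance, which proves the claim.

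The main obstacle I anticipate is twofold. First, establishing \emph{completeness} of the enumeration — that semi-strategy really collapses the game to the two coordinates $r_{1B},r_{2A}$ — which rests on the ``never disapprove a sure-good proposal'' observation together with the ``expert $3$ is never pivotal'' case check, and requires care because the sign of $a-a'$ is unknown (this is why expert $3$'s certainty alone is not enough and pivotality must be argued directly from the weights). Second, and more delicate, tuning $w_1>w_2>w_3>0$ so that the two natural manipulations — expert $2$ knocking out $A$ and expert $1$ knocking out $B$ — exactly neutralise each other: as soon as expert $1$ disapproves $B$, expert $2$'s manipulation must stop being \emph{strictly} profitable, so that a semi-strategic expert $2$ is forbidden from it and the dynamics cannot settle. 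Making a single weight ordering serve both roles is the crux; the per-profile verifications are then routine.
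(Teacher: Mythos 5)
Your construction is essentially the paper's own: the paper also uses three experts and two proposals, with two high-weight experts who each consider both proposals good but are certain about different ones (there with $p=0.95$ versus $p=1$ and $T=0.9$, rather than your $p=T$ versus $p=1$), plus a low-weight third expert who favours only one proposal, and the paper establishes non-existence by exhibiting the same four-profile best-response cycle that your static enumeration of semi-strategic profiles traverses. Your version is somewhat more careful in that it justifies why the strategy space collapses to the pair $(r_{1B},r_{2A})$ and explicitly checks all four resulting profiles, whereas the paper simply asserts the cycle; your conclusion is correct.

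One sub-step is wrong, though it does not sink the proof. In profile $(0,0)$ you claim that expert $1$'s disapproval of $B$ ``neither changes the winner nor her reward.'' That is false: there $A$ has weight $w_1+w_3$ and $B$ has weight $w_2$, so if expert $1$ flips $r_{1B}$ back to $1$ then $B$ reaches $w_1+w_2>w_1+w_3$ and \emph{wins}, dropping her reward from $a$ to $a'(1-T)$ --- so her lie $r_{1B}=0$ is in fact strictly beneficial in $(0,0)$ and does not violate her semi-strategic constraint. (Your inference slips because ``$A$ wins regardless of $r_{2A}$'' says nothing about varying $r_{1B}$.) The profile $(0,0)$ is nevertheless correctly excluded, but only by the observation you relegate to a parenthesis: expert $2$'s lie $r_{2A}=0$ leaves $A$ the winner either way, and since $p_{2A}=T$ exactly, approving and disapproving the winner give her the identical expected reward $a'(1-T)$, so her lie is not strictly beneficial and the semi-strategic condition fails for her. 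You should promote that argument to be the actual reason $(0,0)$ is discarded and delete the expert-$1$ claim for that profile; with that repair the enumeration is complete and the proposition follows.
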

\begin{proof}
     Suppose that we have 3 experts and 2 proposals, $T=0.9$ (with $a, a'$ and $s$ set appropriately), and
    \begin{itemize}
        \item expert $1$ has $w_1 = 0.49$, $p_{11} = 0.95, p_{12} = 1$;
        \item expert $2$ has $w_2 = 0.41$, $p_{21} = 1, p_{22} = 0.95$;
        \item expert $3$ has $w_3 = 0.1$, $p_{31} = 1, p_{32} = 0$.
    \end{itemize}
    Clearly, expert $3$ would always vote for proposal 1 only. The remaining experts would honestly approve both proposals, but expert 1 has higher expected reward if proposal 2 wins, while expert 2 believes proposal 1 maximizes her utility. We present the following cycle of deviations. At first they all vote honestly. Then expert 1 says `no' to the first proposal, making proposal 2 the winner and improving her utility. In response, expert 2 says rejects the second proposal, leading proposal 1 to reclaim the win. Then, expert `1' has to switch her vote in favour of proposal 1 because they are semi-strategic. Finally, expert 2 says `yes' to the second proposal because they are semi-strategic too.
    
    Therefore, at any configuration of votes, some expert wants to deviate and $\mechav$ has no PNE for this instance.
\end{proof}

Despite \Cref{thm:no_pne}, our mechanism does have approximate PNE's for semi-strategic experts and an appropriate choice of parameters, as we show next. Moreover, these approximate equilibria always lead to choosing approximately optimal proposals.
% \paragraph{A simple example}
%     As a simple example, let's see what this means for $s = 17a'$, $a = 2a'$ and $g_{ij'} = c a'$. We would have that
%     $$
%         T = \frac{a' + s}{a' + s + a} = 
%         \frac{a' + 17a'}{a' + 17a' + 2a'} = 0.9.
%     $$
%     Moreover, the three bad thresholds (from the previous cases) would be:
%     $$
%         p_1 = T \frac{a + s}{a + s + g_{ij'}}
%         = 0.9 \cdot \frac{2a' + 17a'}{2a' + 17a' + c a'}
%         = \frac{17.1}{19 + c},
%     $$
%     $$
%         p_2 = \frac{a' (1 - T) + s}{a + s + g_{ij'}}
%         = \frac{a' (1 - 0.9) + 17a'}{2a' + 17a' + c a'}
%         = \frac{17.1}{19 + c}
%     $$
%     $$
%     p_3 = \frac{a T - a' + s T - s}{g_{ij'} - a'} = \frac{2a' \cdot 0.9 - a' + 17a' 0.9 - 17a'}{c a' - a'}
%     = \frac{0.8 a' - 1.7a'}{c a' - a'}
%     = \frac{0.9}{1-c}.
%     $$
% If the agent has $c > 17$ (therefore $g_{ij'} > s)$, the likelihood that he deviates is roughly proportional to $g_{ij'} / s$.

\subsection{Approximate Equilibria of $\mechav$}
\label{sec:approx-equilibria}
Since pure Nash equilibria may not always exist when dealing with semi-strategic experts, we have to fall back to showing the existence of approximate PNE's. This can be achieved by careful tuning of the parameters $a, a'$ and $s$ when defining the reward function. Recall that $a, a', s$ and $T$ are related via Equation \eqref{def:threshold}; an equivalent equation appears in the proof of the theorem below as \eqref{eq:inflection}.
\begin{theorem}\label{thm:PNE}
    Suppose that for $T\in[0,1]$, we set $\epsilon \ge 0$ such that:
    \begin{itemize}
        \item $1/(\epsilon + 1) < T$
        \item $a = (1+\epsilon)\cdot a'\cdot (1-T) > a'$
        \item $s = \frac{a \cdot \left(T \cdot (\epsilon + 1) - 1\right)}{(1 - T)\cdot(\epsilon + 1)}$
    \end{itemize}
    In addition, suppose that for every player $i$ and proposal $j$ we have that $g_{ij} \le a \cdot \delta$. Then the voting profile $\vecc r$ where everyone votes honestly is $(1+\epsilon)\cdot(1+\delta)$-approximate pure Nash equilibrium.
\end{theorem}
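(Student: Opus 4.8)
\emph{Proof plan.} The plan is to prove two uniform bounds and then multiply them out: under the honest profile every expert's utility is at least $\tfrac{a}{1+\epsilon}$, and under \emph{any} unilateral deviation it is at most $a(1+\delta)$. Since $\mechav$ is deterministic, for a fixed profile the winner $j^\star = x(\vecc r,\vecc w)$ and hence each $u_i^{\mechav}$ is well defined without an outer expectation, so there is nothing probabilistic to handle. Before starting I would record the rearrangement of \eqref{def:threshold} as $aT - s(1-T) = a'(1-T)$, and observe that the stated parameter choice gives $s \ge 0$, $a > a'$ and $a = (1+\epsilon)\,a'(1-T)$, i.e. $a'(1-T) = \tfrac{a}{1+\epsilon}$.

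For the lower bound, fix $i$ and let $j^\star$ be the winner of the honest profile $\vecc r$. Since $g_{ij^\star} \ge 0$ and the term $p_{ij^\star}g_{ij^\star}$ in $u_i^{\mechav}$ does not depend on how $i$ votes, $u_i^{\mechav}(\vecc r \mid \vecc w,\vecc p)$ is at least the expected $f_i$-payoff of $i$'s honest vote on $j^\star$. If $p_{ij^\star} \ge T$ she approves, earning $a p_{ij^\star} - s(1-p_{ij^\star})$, which is increasing in $p_{ij^\star}$ and hence $\ge aT - s(1-T) = a'(1-T)$; if $p_{ij^\star} < T$ she disapproves, earning $a'(1-p_{ij^\star}) > a'(1-T)$. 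If instead the honest outcome is the dummy $0$ — which can only occur when $p_{ij} < T$ for all $i,j$, so that all estimated qualities vanish — I would treat it as a bad proposal that nobody approves, again giving each expert $a' \ge a'(1-T)$. In every case $u_i^{\mechav}(\vecc r \mid \vecc w,\vecc p) \ge a'(1-T) = \tfrac{a}{1+\epsilon}$.

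For the upper bound, take any deviation $\vecc r_i'$ and let $j'$ be its winner. Whatever $i$'s vote on $j'$, her expected $f_i$-payoff is either $a p_{ij'} - s(1-p_{ij'}) \le a$ (approving, since $p_{ij'}\le 1$ and this is increasing) or $a'(1-p_{ij'}) \le a' < a$ (disapproving); and $p_{ij'}g_{ij'} \le g_{ij'} \le a\delta$. Hence $u_i^{\mechav}((\vecc r_i',\vecc r_{-i}) \mid \vecc w,\vecc p) \le a\delta + a = a(1+\delta)$. Combining the two bounds, $u_i^{\mechav}((\vecc r_i',\vecc r_{-i}) \mid \vecc w,\vecc p) \le a(1+\delta) = (1+\epsilon)(1+\delta)\cdot\tfrac{a}{1+\epsilon} \le (1+\epsilon)(1+\delta)\,u_i^{\mechav}(\vecc r \mid \vecc w,\vecc p)$; as $i$ and $\vecc r_i'$ were arbitrary, honest voting is a $(1+\epsilon)(1+\delta)$-PNE.

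There is no single hard step — the argument is essentially these two estimates — but two points need care. First, the lower bound $a'(1-T)$ on the honest payoff is exactly where the threshold identity obtained from \eqref{def:threshold} and the relation $a = (1+\epsilon)a'(1-T)$ are used, and where the no-winner corner case must be disposed of. Second, one should check that the hypotheses (notably $1/(1+\epsilon) < T$ and $a = (1+\epsilon)a'(1-T) > a'$) really do force $s \ge 0$ and $a > a'$, which is what makes $a'(1-T)$ the minimum and $a$ the maximum over $p \in [0,1]$ of the honest-voting payoff $\max\{a p - s(1-p),\,a'(1-p)\}$. Finally, it is worth remarking that when a deviation does not change the winner one gets $u_i^{\mechav}(\vecc r_i',\vecc r_{-i} \mid \vecc w,\vecc p) \le u_i^{\mechav}(\vecc r \mid \vecc w,\vecc p)$ outright, so the multiplicative slack $(1+\epsilon)$ is consumed only by winner-flipping deviations and the factor $(1+\delta)$ only by the external rewards.
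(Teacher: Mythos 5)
Your proof is correct and follows essentially the same route as the paper's: lower-bound the honest payoff by $a'\cdot(1-T) = a/(1+\epsilon)$ via the threshold identity $aT-(1-T)s=a'(1-T)$, upper-bound any deviation's payoff by $a\cdot(1+\delta)$, and take the ratio. The only differences are cosmetic --- you collapse the paper's case split on whether the deviator is honest or dishonest about the new winner $j'$ into the single uniform bound $\max\{a,a'\}+a\delta = a(1+\delta)$, and you make explicit the winner-unchanged and dummy-outcome corner cases that the paper leaves implicit.
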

\begin{proof}
As always, the honest behaviour of expert $i$ should be to approve proposal $j$ if and only if they have $p_{ij} \ge T$. From the perspective of expert $i$, their reward for voting in favour of proposal $j$ is:
\[
p_{ij}\cdot a - (1-p_{ij}) \cdot s.
\]
Notice that this expression is strictly increasing in $p_{ij}$.
Voting against proposal $j$ yields an expected reward equal to
$a' \cdot (1-p_{ij})$, which is strictly decreasing in $p_{ij}$. As before, to ensure honest behaviour, the two expressions need to be equal for $p_{ij} = T$:
\begin{equation}\label{eq:inflection}
    \quad T \cdot a - (1 - T)\cdot s = a' \cdot (1-T) \,.
\end{equation}
Additionally, we need that $a$, which is the payoff for $p_{ij} = 1$ is also the maximum possible reward and satisfies $a = (1+\epsilon)\cdot (a\cdot T - (1-T)\cdot s)$. Since $T \in [0,1]$, \Cref{eq:inflection} is actually the global minimum of the honest response reward. Therefore, the condition that $a$ is the maximum can be replaced by $a \ge a'$, since either $a$ or $a'$ are the extreme points. Since $T$ and $\epsilon$ are given, we can solve for the remaining values using $a' \ge 0$ as a the free parameter. Moreover, these solutions are non-negative for $1/(\epsilon + 1)<T<1$.

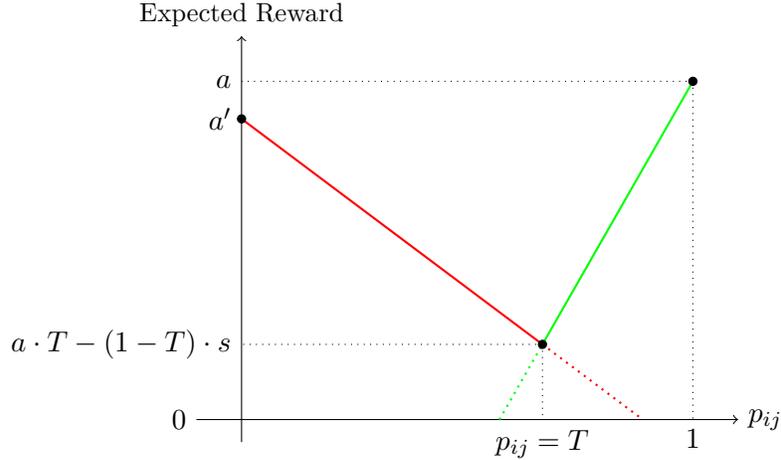
\begin{figure}[h]
    \centering
    \begin{tikzpicture}[scale=3, xscale = 2, yscale=1, domain=0:1, range=0:1, samples=400, smooth]
    
    % x axis
    \draw[->] (-0.1, 0) node[left] {$0$} -- (1.1, 0) node[right] {$p_{ij}$};
    
    % y axis
    \draw[->] (0, -0.1) -- (0, 1.7) node[above] {\small Expected Reward};
    
    % reward for saying 'no'
    %\draw[color=red, thick, domain=0:0.6666] plot (\x,{1 * (1 - \x)});
    \draw[color=red, thick, domain=0:0.6666] plot (\x,{-1.5 * \x + 4/3});
    %\draw[dotted, color=red, thick,domain=0.6666:1] plot (\x,{1 * (1 - \x)});
    \draw[dotted, color=red, thick,domain=0.6666:8/9] plot (\x,{-1.5 * \x + 4/3});

    % reward for saying 'yes'
    \draw[dotted, color=green, thick, domain=0.5714:0.6666] plot (\x,{1.5 * \x - (1 - \x) * 2});
    \draw[color=green, thick, domain=0.6666:1] plot (\x,{1.5 * \x - (1 - \x) * 2});

    \draw[dotted] (0.66666, {0.3333333}) -- (0.6666666666, 0) node[below] {$p_{ij} = T$};
    \node [scale=0.3, draw, circle, fill] at (0.66666, 0.333333) {};
    
    \draw[dotted] (0.66666, {0.3333333}) -- (0, 0.333333) node[left] {$\quad a \cdot T - (1-T)\cdot s$};
    
    \node [scale=1, left] at (0, 4/3) {$a'$};
    %\node [scale=1, above] at (1, 1.5) {$a$};
    \node [scale=0.3, draw, circle, fill] at (1, 1.5) {};
    \node [scale=0.3, draw, circle, fill] at (0, 4/3) {};   
    
    \draw[dotted] (1, 1.5) -- (0, 1.5) node[left] {$a$};
    \draw[dotted] (1, 1.5) -- (1, 0) node[below] {1};
    
    \end{tikzpicture}
    
    \caption{The expected reward (as a function of $p_{ij}$) for honest voting, assuming $g_{ij} = 0$ and proposal $j$ won. The red line is $a' \cdot (1 - p_{ij})$, corresponding to voting against $j$, while the green line is $a \cdot p_{ij} - (1-p_{ij}) \cdot s$ for voting in favour of it.}
    \label{fig:function_example}
\end{figure}

We are now ready to show that $\vecc r$, the honest voting profile, is an approximate PNE.
Let $\winning$ be the winning proposal. Clearly, any expert who \emph{cannot} change the outcome is playing their best response. Suppose that expert $i$ has a beneficial deviation $\vecc r_i'$ and changes the winner to $j'$. There are two cases:
\begin{itemize}
    \item The expert $i$ is honest about $j'$ (but possibly changed his vote on some other proposals): this means that either $p_{ij'} \ge T$ and $r_{ij'} = 1$ or $p_{ij'} < T$ and $r_{ij'} = 0$. In this case, the maximum possible reward she could get is $a + g_{ij'} \le (1 + \delta)\cdot a$. On the other hand, the minimum possible reward for an honest vote is $a'\cdot(1-T) = a / (1 + \epsilon)$. Therefore, this deviation can yield at most $(1+\epsilon)\cdot(1 + \delta)$ times the reward of the honest response.
    \item The expert $i$ is dishonest about $j'$: in this case, the reward (without $g_{ij}$) is at most $a' \cdot (1-T)$, which is the minimum possible reward for honest voting. As with the previous case, the addition of $g_{ij'}$ is not great enough to motivate the expert to deviate.
\end{itemize}

Therefore, the honest profile is an ($1+\epsilon$)-pure Nash equilibrium.
\end{proof}

Note that the existence of the approximate PNE of \Cref{thm:PNE} is not guaranteed for an arbitrarily small $\epsilon$. So, it is natural to ask how inefficient these equilibria are, with respect to achieving our objective of maximizing $\texttt{Qual}$. We deal with this question in the following section.

\subsection{Price of Anarchy of $\mechav$}
\label{sec:PoA}
\newcommand{\winnerPNE}{j}
\newcommand{\winnerOPT}{j^\star}

Here we study the Price of Anarchy of the approximate pure Nash equilibria of $\mechav$. That is, we bound the quality of a proposal returned by the mechanism in an approximate equilibrium in terms of the best possible estimated quality. Surprisingly, we show that for any $\epsilon$, $(1+\epsilon)$-approximate PNE result in quality which is within a factor of $2$ of the optimal estimated quality. Note that although the statement of \Cref{thm:PoA} does not mention the $g_{ij}$'s explicitly, these are taken into consideration via the conditions of \Cref{thm:simple_pne}. Moreover, this bound on the Price of Anarchy is tight.

\begin{theorem}\label{thm:PoA}
    Suppose that $a,a'$ and $s$ are chosen such that:
    \begin{itemize}
        \item $a = (1 + \epsilon) \cdot (1-T) \cdot a'$.
        \item $(1-T)\cdot a' = T\cdot a - (1-T)\cdot s$.
    \end{itemize}
    In addition, the $p_{ij} < T$ of every expert satisfy the conditions of \Cref{thm:simple_pne}. Then, the Price of Anarchy of $\mechav$ over $(1+\epsilon)$-approximate pure Nash Equilibria is at most $2$.
\end{theorem}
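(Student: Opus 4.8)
The plan is to fix a $(1+\epsilon)$-PNE $\vecc r$ with winner $j = x(\vecc r,\vecc w)$, let $j^\star$ be a proposal of maximum estimated quality, and show $\texttt{Qual}[j] \ge \tfrac12\,\texttt{Qual}[j^\star]$; if $j=j^\star$ or $\texttt{Qual}[j] \ge \texttt{Qual}[j^\star]$ there is nothing to prove, so assume otherwise. For a proposal $\ell$ write $t_\ell(\vecc r) = \sum_{i : r_{i\ell}=1} w_i$ for its weighted approval under $\vecc r$, so that $x$ picks an $\ell$ maximising $t_\ell(\vecc r)$ and in particular $t_j(\vecc r) \ge t_{j^\star}(\vecc r)$.

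The first key step identifies who approves the winner. Because the $p_{ij'}<T$ satisfy the hypotheses of \Cref{thm:simple_pne}, a semi-strategic expert never approves a proposal she considers bad, so $\{i : r_{ij}=1\} \subseteq \{i : p_{ij}\ge T\}$. Conversely, if some expert $i$ had $p_{ij}\ge T$ but $r_{ij}=0$, then — since $j$ is already the winner — flipping $i$'s vote on $j$ to $1$ cannot change the winner, hence changes $i$'s expected reward from $a'(1-p_{ij})$ to $a p_{ij} - s(1-p_{ij})$; the defining relation $T a - (1-T)s = (1-T)a'$ makes $a p - s(1-p) \ge a'(1-p)$ for every $p\ge T$, so this would be a weakly improving move to honesty, contradicting the strict-decrease clause of semi-strategy. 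Therefore $\{i : r_{ij}=1\} = \{i : p_{ij}\ge T\}$ and $t_j(\vecc r) = \texttt{Qual}[j]$.

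Next I would analyse the optimal proposal. Let $H = \{i : p_{ij^\star}\ge T\}$, so $\texttt{Qual}[j^\star] = \sum_{i\in H} w_i$, and let $D = \{i\in H : r_{ij^\star}=0\}$ be the honest supporters of $j^\star$ who fail to approve it in $\vecc r$; then $\sum_{i\in H\setminus D} w_i \le t_{j^\star}(\vecc r) \le t_j(\vecc r) = \texttt{Qual}[j]$, using once more that no semi-strategic expert approves a bad proposal to see $t_{j^\star}(\vecc r)\le\texttt{Qual}[j^\star]$. For each $i\in D$, semi-strategy forces flipping $r_{ij^\star}$ to $1$ to strictly decrease $u_i$; since $i$'s reward depends only on the winner and her vote on it, this is possible only if the flip changes the winner, and the only candidate is $j^\star$, giving $t_{j^\star}(\vecc r) + w_i \ge t_j(\vecc r)$. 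Thus every expert in $D$ is individually pivotal for $j^\star$ and in addition strictly prefers the current winner $j$ to $j^\star$, which together with the reward relations between $a,a',s,T$ and the \Cref{thm:simple_pne} bounds on the small $p_{i\ell}$'s constrains both the weights and the beliefs of the experts in $D$. I would use these constraints to conclude $\sum_{i\in D} w_i \le \texttt{Qual}[j]$, whence $\texttt{Qual}[j^\star] = \sum_{i\in H\setminus D} w_i + \sum_{i\in D} w_i \le 2\,\texttt{Qual}[j]$, i.e.\ $\poa(\mechav)\le 2$.

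The crux — and where I expect the real work to be — is exactly the bound $\sum_{i\in D} w_i \le \texttt{Qual}[j]$. Pivotality alone only gives $w_i \ge t_j(\vecc r)-t_{j^\star}(\vecc r)$ for each $i\in D$, which does not cap the total weight of $D$; the argument must extract more from the $(1+\epsilon)$-equilibrium inequalities and the strict preference ``$j$ over $j^\star$'' of each $i\in D$ — for instance by showing that once the heaviest pivotal expert is accounted for the residual gap $t_j(\vecc r)-t_{j^\star}(\vecc r)$ is already driven to zero, so no second independent pivotal weight-class can be hidden in $D$, or by using the deterministic tie-breaking rule against an expert who would otherwise have a profitable move. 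I also anticipate having to be careful that $t_j$ genuinely remains the winner's tally when several experts in $D$ are considered at once, and that the reward inequalities are invoked with the correct direction of the $a$ versus $a'$ comparison.
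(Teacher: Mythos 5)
Your skeleton is essentially the paper's: you correctly establish that the approvers of the winner $j$ are exactly the experts with $p_{ij}\ge T$ (so $t_j(\vecc r)=\texttt{Qual}[j]$), that the approvers of $j^\star$ are contained in $H=\{i : p_{ij^\star}\ge T\}$ and have total weight at most $t_j(\vecc r)$, and that everything reduces to bounding the weight of the defectors $D=\{i\in H : r_{ij^\star}=0\}$ by $\texttt{Qual}[j]$. But that last bound is exactly where your argument stops, and the routes you sketch for it (per-expert pivotality, tie-breaking, a ``heaviest pivotal expert'' argument) do not lead anywhere: pivotality only yields the lower bound $w_i\ge t_j(\vecc r)-t_{j^\star}(\vecc r)$ for each $i\in D$, which is perfectly compatible with $D$ containing arbitrarily many heavy experts, so no upper bound on $\sum_{i\in D}w_i$ can come from that direction.

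The missing idea is a further case split on $p_{ij}$ for $i\in D$, which shows $D\subseteq\{i : r_{ij}=1\}$, after which $\sum_{i\in D}w_i\le t_j(\vecc r)=\texttt{Qual}[j]$ is immediate and your final inequality $\texttt{Qual}[j^\star]\le 2\,\texttt{Qual}[j]$ follows. Concretely: if $i\in D$ has $p_{ij}\ge T$, then by your own first step $r_{ij}=1$, so $i$'s weight is already counted in $\texttt{Qual}[j]$. If instead $p_{ij}<T$, the paper argues (its Case 2b) that such an expert cannot lie in $D$ at all: she disapproves $j$ by \Cref{thm:simple_pne}, and the semi-strategic condition applied to her vote on $j^\star$ forces $r_{ij^\star}=1$, since flipping $r_{ij^\star}$ to $1$ either leaves $j$ the winner (in which case her utility is unchanged, hence not strictly decreased) or elects $j^\star$, which she honestly supports. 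Equivalently, in the paper's notation $D$ is the $(1-\rho)$-fraction of $C_3=\{i : p_{ij}\ge T \text{ and } p_{ij^\star}\ge T\}$, all of whose members approve $j$; the paper then finishes with the same two counting inequalities you would write (the winner condition $t_j(\vecc r)\ge t_{j^\star}(\vecc r)$ and the assumption $\texttt{Qual}[j]<\frac{1}{2}\texttt{Qual}[j^\star]$) and derives a contradiction for every $\rho\in[0,1]$. Without the claim that no expert with $p_{ij}<T$ and $p_{ij^\star}\ge T$ withholds her vote from $j^\star$, the factor $2$ does not follow, so as written your proposal has a genuine gap at its self-identified crux.
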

\begin{proof}
    Let $\vecc r$ be a $(1+\epsilon)$ approximate PNE whose winner is $\winnerPNE$. Further, let $\winnerOPT$ be the proposal with highest quality and suppose that:
    \[
        \texttt{Qual}[\winnerPNE] < \frac{1}{2}\cdot \texttt{Qual}[\winnerOPT].
    \]

Each expert $i$ belongs to one of the following categories:
\begin{itemize}
    \item \textbf{Case 1}: $p_{i\winnerPNE} < T$ and $p_{i\winnerOPT} < T$: In this case, the expert has to disapprove both proposals at the equilibrium $\vecc r$. By \Cref{thm:simple_pne}, expert $i$ would gain no benefit by voting in favour of either $j$ or $j^\star$. 
    \item Either $p_{i\winnerPNE} \ge T$ or $p_{i\winnerOPT} \ge T$:
    \begin{itemize}
        \item \textbf{Case 2a}: In the first case, if expert $i$ submits a `no' vote for $\winnerPNE$ and it remains the winner, this `no' vote also clearly reduces the reward of expert $i$ compared to a `yes' vote. Since expert $i$ is semi-strategic, they have to vote in favour of $\winnerPNE$.
        \item \textbf{Case 2b}: In the second case, by \Cref{thm:simple_pne} they cannot approve $\winnerPNE$. Since $\vecc r$ is an equilibrium where $\winnerPNE$ wins and they are semi-strategic, voting `no' for $\winnerPNE$ does not strictly increase their reward. Therefore expert $i$ votes only in favour of $\winnerOPT$.
    \end{itemize}
    \item \textbf{Case 3}: $p_{i\winnerPNE} \ge T$ and $p_{i\winnerOPT} \ge T$: Similarly to \textbf{Case 2}, the expert has to approve $\winnerPNE$. However, not every expert needs to vote in favour of $\winnerOPT$. They only do so if permitted by the equilibrium condition (i.e., if the winner stays $\winnerPNE$).  
\end{itemize}
    We partition the experts into sets $C_1, C_{2a}, C_{2b}$ and $C_3$ respectively, indexed according to the aforementioned cases. In addition, let $\rho \in [0,1]$ be the fraction of experts in $C_3$ that voted for $\winnerOPT$ as well as $\winnerPNE$. Clearly, since proposal $\winnerPNE$ is the winner we have that:
    \begin{equation}\label{eq:part_a}
        \sum_{i \in C_{2a}} w_i + \sum_{i \in C_{3}} w_i \ge \sum_{i \in C_{2b}} w_i + \rho \cdot \sum_{i \in C_{3}} w_i
        \Rightarrow
        \sum_{i \in C_{2a}} w_i + (1-\rho) \cdot \sum_{i \in C_{3}} w_i \ge \sum_{i \in C_{2b}} w_i.
    \end{equation}
    In addition, by since $\winnerOPT$ maximizes the quality objective, we have:
    \begin{equation}\label{eq:part_b}
        \sum_{i \in C_{2a}} w_i + \sum_{i \in C_{3}} w_i < \frac{1}{2} \cdot \left(\sum_{i \in C_{2b}} w_i + \sum_{i \in C_{3}} w_i\right)
        \Rightarrow  2 \cdot \sum_{i \in C_{2a}} w_i + \sum_{i \in C_{3}} w_i < \sum_{i \in C_{2b}} w_i.
    \end{equation}
    Combining \Cref{eq:part_a} with \Cref{eq:part_b} we get:
    \begin{equation*}
        \sum_{i \in C_{2a}} w_i + (1-\rho) \cdot \sum_{i \in C_{3}} w_i >  \sum_{i \in C_{2a}} w_i + \sum_{i \in C_{3}} w_i,
    \end{equation*}
    which is impossible for any $\rho \in [0,1]$, leading to a contradiction.
\end{proof}
We complement the previous theorem with a matching lower bound.
\begin{theorem}
    The Price of Anarchy of $\mechav$ is greater than or equal to 2.
\end{theorem}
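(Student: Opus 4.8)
The plan is to prove tightness by exhibiting a single small instance on which $\mechav$ admits a genuine pure Nash equilibrium (hence a $(1+\epsilon)$-PNE for the relevant $\epsilon$) whose winner has estimated quality exactly half of the optimum; the supremum in the definition of $\poa$ then gives the bound. Fix $T\in(0,1)$, some $\epsilon>0$, and $a,a',s$ as in the hypotheses of \Cref{thm:PoA} (so $T\cdot a-(1-T)\cdot s=(1-T)\cdot a'$ and $a=(1+\epsilon)(1-T)a'$), and set every external reward $g_{ij}$ to $0$. Take two proposals $1,2$, expert $A$ with weight $w_A=1$ and beliefs $p_{A1}=1,\ p_{A2}=T$, and expert $B$ with weight $w_B=1$ and beliefs $p_{B1}=0,\ p_{B2}=1$, with the (fixed, deterministic) tie-breaking assumed to favour proposal $1$. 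Then $\texttt{Qual}[1]=w_A=1$ and $\texttt{Qual}[2]=w_A+w_B=2$, so proposal $2$ is optimal. The candidate bad equilibrium is the profile $\vecc r$ in which $A$ approves only proposal $1$ and $B$ approves only proposal $2$: proposals $1$ and $2$ each receive weighted approval $1$, so by tie-breaking $x(\vecc r,\vecc w)=1$, of quality $1$.

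The first thing I would check is that $\vecc r$ is a legal profile for \emph{semi-strategic} experts. Expert $B$ votes exactly honestly, and since $g_{B1}=0$ the single sub-threshold belief $p_{B1}$ trivially satisfies the conditions of \Cref{thm:simple_pne}. The only non-honest vote is $A$'s ``no'' on proposal $2$ even though $p_{A2}=T\ge T$, so I must show that flipping this one coordinate strictly lowers $A$'s utility. After the flip proposal $2$ wins (its weighted approval becomes $w_A+w_B>w_A$), and $A$, now approving the winner, earns $T\cdot a-(1-T)\cdot s=(1-T)a'=a/(1+\epsilon)$, which is strictly less than the reward $a$ she gets under $\vecc r$ (there she approves proposal $1$, which she believes is good with certainty). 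So the semi-strategic inequality holds; this is exactly where $\epsilon>0$ and the calibration $T\cdot a-(1-T)\cdot s=(1-T)a'$ are used.

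Next I would verify $\vecc r$ is a pure Nash equilibrium. Expert $A$ approves a winning proposal she is certain is good, obtaining $a$, which is the maximum utility available to her (the only alternative outcomes are ``proposal $2$ wins'', worth at most $(1-T)a'=a/(1+\epsilon)\le a$ to her, or ``proposal $1$ wins but she disapproves'', worth $0$), so she has no beneficial deviation. Expert $B$ cannot change the winner: proposal $1$ always collects $w_A=1$ from $A$, while proposal $2$ collects at most $w_B=1$ and loses the tie, so proposal $1$ wins after every unilateral deviation of $B$; given that, $B$'s best response on proposal $1$ is ``no'' (expected reward $a'$, as she is certain $1$ is bad) rather than ``yes'' (expected reward $-s$), which is what $\vecc r$ prescribes. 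Hence $\vecc r\in\mathcal{Q}^{\mechav}_{\epsilon}(\vecc w,\vecc p)$, and therefore
\[
  \poa(\mechav)\ \ge\ \frac{\texttt{Qual}[2]}{\texttt{Qual}[x(\vecc r,\vecc w)]}\ =\ \frac{2}{1}\ =\ 2 .
\]
If one prefers not to invoke the tie-breaking rule, replace $w_A=1$ by $w_A=1+\eta$ for a small $\eta>0$: the same profile is still a $(1+\epsilon)$-PNE by the identical argument, with winner of quality $1+\eta$ against optimum $2+\eta$, and letting $\eta\to 0^{+}$ again gives $\poa(\mechav)\ge 2$.

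The main obstacle is getting expert $A$'s lone dishonest vote to play two roles simultaneously: correcting it must be a \emph{strict} loss for $A$ (so that $\vecc r$ is admissible for semi-strategic experts — this forces $\epsilon>0$ and pins down the parameter calibration), while the expert who is actually harmed by that vote, namely $B$, must be provably powerless to force the optimal proposal to win (handled by the tie, or the weight gap $w_A>w_B$). Once those two facts are in place, everything else is a one-line evaluation of the reward function.
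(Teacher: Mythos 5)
Your proof is correct and takes essentially the same route as the paper's: a two-expert, two-proposal instance in which one expert, whose belief about the high-quality proposal sits exactly at the threshold $T$, semi-strategically withholds approval because she strictly prefers a proposal she is certain about, while the other expert is powerless (by tie-breaking or a weight gap) to restore the optimum, yielding a winner of half the optimal quality. Your write-up is somewhat more explicit than the paper's in verifying the semi-strategic admissibility and the equilibrium conditions, but the construction and the underlying idea are the same.
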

\begin{proof}
Consider and instance with 2 experts and 2 proposals, with the following parameters:
\begin{itemize}
    \item Expert 1 has $p_{11} = T < p_{12} = 1$ and $w_1 = 1 + \epsilon$.
    \item Expert 2 has $p_{11} = 1, p_{12} = 0$ and $w_2 = 1 - \epsilon$.
\end{itemize}
All $g_{ij}$ are equal to zero.\smallskip

The optimal outcome is to elect proposal 1, that has quality $2$. However, it is a semi-strategic deviation for expert 1 to vote \emph{against} proposal 1, since she likes proposal 2 slightly more, even though both meet the acceptance threshold $T$. In this case, expert 2 has no way to change the outcome with her lower weight, leading to an exact PNE with quality $1 + \epsilon$.
\end{proof}

%%%%%%%%%%%%%%%%%%%%%%%%%%%%%%%%%%%%%%%%
\section{The Repeated Game}
\label{sec:repeated}
%%%%%%%%%%%%%%%%%%%%%%%%%%%%%%%%%%%%%%%%
In the previous sections we described a system which incentivizes the experts to only vote `yes' for proposals that they believe have a high chance of being good. While we have defined the reward of each expert to be proportional to her weight, this does not have any significant impact on our technical results so far. As mentioned in the introduction, however, we want the weight of an expert to serve as a proxy for that expert's demonstrated expertise level, capturing her ``reputation'' in the system. This, of course, makes sense in a repeated game setting, where the weights are updated after each round of proposals. We assume that every time we have a fresh set of proposals, independent of any past decisions, but the different parameters of the system (threshold $T$, reward parameters $a$, $a'$ and $s$, etc.) remain the same and there is a known rule for updating the weights.

An analog of the various Folk Theorems (see, e.g., \cite{Friedman71,FudenbergMaskin86}) would not apply in our setting with the semi-strategic experts, since the notion of a ``threat'' used in their proofs cannot be used anymore. Nevertheless, we  show below that if rewards are smoothed out appropriately, then truth-telling is an approximate pure Nash equilibrium. This, combined with Theorems \ref{thm:PNE} and \ref{thm:PoA} directly gives us a \emph{Price of Stability} (which is the ratio between the quality of the best possible equilibrium and the optimal outcome) of $2$ for this repeated game.

Let $\vecc w^t = (w_1^t, w_2^t, \ldots, w_n^t)$ be the weights after round $t\ge 1$. In this context, a voting mechanism will involve two components: a reward function $\vecc f^t(\vecc r, \vecc w^t, j, q_j)$ and a weight update rule $\vecc w^{t+1} = g(\vecc w^t, \vecc r, j, q_j)$. The reward function will be the same to the single-shot game: $f_i^t({\vecc r}, \vecc w^t,  j, q_{j}) = f_i({\vecc r}, \vecc w^t, j, q_{j})$. 
For the sake of presentation, we will focus on a simple weight update rule here, so that the weights converge to the percentage of correct predictions; the same argument, however, could be made for \emph{any} update rule. 

In principle, we would not like the weights to fluctuate widely from round to round, since then they would not capture the empirical expertise level as intended. Suppose we define $$\omega_i^t = \frac{\# \text{ of correct predictions}}{t}$$. Even if we assume that each expert has an inherent expertise level $\pi_i$ so that $\lim_{t\to\infty} \omega_i^t = \pi_i$, these weights can still fluctuate a lot when $t$ is small. Having these weights as a starting point, however, for a small $\zeta>0$, we may define $\vecc w^t$ as follows: 
\begin{align*}
w_i^{0}&=1/2\,; \\
w_i^{t+1}&= 
        \begin{cases}
            \min\{ \omega_i^t, (1+\zeta) \cdot w_i^t\}, &\text{if } t\ge 1 \text{ and } w_i^t \le \omega_i^t \\
            \max\{ \omega_i^t, (1-\zeta) \cdot w_i^t\}, &\text{if } t\ge 1 \text{ and } w_i^t > \omega_i^t \\
        \end{cases}
\end{align*}
Using these ``delayed updates'', we still have $\lim_{t\to\infty} w_i^t = \pi_i$, but the weights never change more than $100\cdot\zeta\,\%$ from round to round. Note that is is not necessary to start with a rule that converges in any sense.

As usual, we  assume that future rewards are discounted by a \emph{discount factor} $\gamma\in(0,1)$. That is, an amount of money $x$ that is expected to be won $\tau$ rounds into the future, has value $\gamma^{\tau}x$ at the present moment for any of the experts.

\begin{theorem}\label{thm:PNE-repeated}
    Let $\xi\in (0,1)$. Also let $T$, $\epsilon$, $\delta$, $a$, $a'$, and $s$ be like in Theorem \ref{thm:PNE},
    and suppose that for every player $i$ and any proposal $j$ of any round $t$, we have $g_{ij} \le w^t_i \cdot a \cdot \delta$. 
    Then the sequence of voting profiles $(\vecc r^t)_{t\in\mathbb{N}}$ where everyone votes honestly in each round $t$ is $(1+3\epsilon)\cdot(1+\delta)$-approximate pure Nash equilibrium for the Repeated Update Selection game with delayed weight updates and sufficiently small discount factor $\gamma$.
\end{theorem}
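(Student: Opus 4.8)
The plan is to bootstrap from the single-round guarantee of \Cref{thm:PNE}, using the discount factor to make everything after the first round negligible. Write $U_i(\cdot)=\sum_{t\ge1}\gamma^{t-1}u_i^t(\cdot)$ for expert $i$'s total discounted expected utility, where $u_i^t$ is her expected round-$t$ utility (the expectation being taken also over the random weights produced by the outcomes of rounds $1,\dots,t-1$). The first step is to isolate two facts that hold \emph{round by round}, for every realization of the preceding history, from the proof of \Cref{thm:PNE}: with the stated $a,a',s$, in any round $t$ with weights $\vecc w^t$ and all other experts honest, (i) honest play gives expert $i$ expected round-$t$ utility at least $w_i^t\cdot a'(1-T)=w_i^t\cdot a/(1+\epsilon)$ --- this is exactly the ``global minimum of the honest-response reward'' computed there, and the nonnegative external term $p_{ij^\star}g_{ij^\star}$ only adds to it; and (ii) \emph{any} round-$t$ strategy gives expected round-$t$ utility at most $w_i^t\cdot a\le a$ from the mechanism (since $a=\max\{a,a',0\}$ is the largest branch of the reward, using $a\ge a'\ge0$) plus $g_{ij^\star}\le w_i^t\cdot a\delta\le a\delta$, hence at most $a(1+\delta)$. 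In the first round both the honest and the deviating weight equal $1/2$, so these bounds become $\tfrac{a}{2(1+\epsilon)}$ and $\tfrac{a}{2}(1+\delta)$ respectively.

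Summing the discounted per-round bounds --- keeping only the $t=1$ term of the (nonnegative) honest sum, and using $w_i^t\le1$ for $t\ge2$ in the deviation sum --- yields
\[
U_i(\text{honest})\ \ge\ \frac{a}{2(1+\epsilon)},
\qquad
U_i(\text{dev})\ \le\ \frac{a}{2}(1+\delta)+a(1+\delta)\sum_{t\ge2}\gamma^{t-1}\ =\ a(1+\delta)\Bigl(\tfrac12+\tfrac{\gamma}{1-\gamma}\Bigr),
\]
and since $U_i(\text{honest})>0$ the ratio is well defined and at most $(1+\epsilon)(1+\delta)\bigl(1+\tfrac{2\gamma}{1-\gamma}\bigr)$. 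As $(1+\epsilon)(1+x)\le1+3\epsilon$ whenever $x\le 2\epsilon/(1+\epsilon)$, it suffices to take $\gamma$ small enough that $\tfrac{2\gamma}{1-\gamma}\le\tfrac{2\epsilon}{1+\epsilon}$, e.g.\ $\gamma\le\epsilon/(1+2\epsilon)$, which gives the claimed $(1+3\epsilon)(1+\delta)$-PNE bound against every (history-dependent) deviation of expert $i$. The auxiliary $\xi$ just pins down how small $\gamma$ must be: requiring $\tfrac{2\gamma}{1-\gamma}\le\xi$ yields a $(1+\epsilon)(1+\xi)(1+\delta)$ bound, which is at most $(1+3\epsilon)(1+\delta)$ once $\xi\le 2\epsilon/(1+\epsilon)$ (and at most $(1+\epsilon)^2(1+\delta)\le(1+3\epsilon)(1+\delta)$ for $\xi=\epsilon$).

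The one genuinely delicate point --- and the reason the factor degrades from $(1+\epsilon)$ to $(1+3\epsilon)$ --- is that a deviation is \emph{not} local across rounds: by flipping the winning proposal in round $t$, expert $i$ changes whether her round-$t$ prediction is scored as correct, hence $\omega_i^t$ and therefore \emph{all} later weights $w_i^s$, which feed back both into her own future rewards and, through the weighted-plurality rule, into the identity of future winners. Trying to track this perturbation round by round through the multiplicative $(1\pm\zeta)$ drift of the update rule blows up geometrically in the horizon and leads nowhere. The argument above is designed to avoid that entirely: discounting caps the total influence of rounds $t\ge2$ by $\tfrac{\gamma}{1-\gamma}$ times the trivial bound $1$ on weights, so only the single-shot pair of bounds $\bigl[\tfrac{a}{1+\epsilon},\,a(1+\delta)\bigr]$ inherited from \Cref{thm:PNE}, applied in the undiscounted first round (where, crucially, the weights under honest and deviating play coincide), does any real work. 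The same argument goes through verbatim for any weight-update rule whose weights are bounded and whose initial value is vote-independent, so the particular ``delayed update'' rule is inessential.
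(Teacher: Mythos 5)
Your proof is correct, but it takes a genuinely different route from the paper's. The paper compares the two full discounted reward streams term by term: it uses the defining property of the delayed updates, namely that the honest weights satisfy $w_i^t \ge (1-\zeta)^t w_i^0$ while any deviating trajectory satisfies $\tilde w_i^t \le (1+\zeta)^t w_i^0$, to lower-bound the honest stream by $\frac{w_i^0(1-T)a'}{1-(1-\zeta)\gamma}$ and upper-bound the deviating stream by $(1+\delta)(1+\epsilon)\frac{w_i^0(1-T)a'}{1-(1+\zeta)\gamma}$, and then picks $\gamma$ so that $\frac{1-(1-\zeta)\gamma}{1-(1+\zeta)\gamma}\le 1+\epsilon$, giving $(1+\epsilon)^2(1+\delta)\le(1+3\epsilon)(1+\delta)$. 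You instead keep only the first-round honest reward (where the weight is pinned to $1/2$ under both honest and deviating play) and crush the entire deviating tail with the crude cap $w_i^t\le 1$, pushing all of the slack onto the discount factor. Both arguments rest on the same two single-round bounds inherited from Theorem \ref{thm:PNE} (honest reward at least $w_i^t a'(1-T)=w_i^t a/(1+\epsilon)$, any reward at most $w_i^t a(1+\delta)$), and both correctly sidestep the cross-round feedback of a deviation on future weights and winners, since the bounds hold pointwise on every trajectory. The trade-off: your version never uses the $(1\pm\zeta)$ drift --- only boundedness of the weights and a vote-independent initialization --- and yields an explicit threshold $\gamma\le\epsilon/(1+2\epsilon)$; the paper's version genuinely exploits the delayed updates and consequently tolerates a much larger discount factor when $\zeta$ is small (its condition couples $\gamma$ with $\zeta$ rather than forcing $\gamma=O(\epsilon)$), which matters because a discount factor below $\epsilon$ makes the repeated game nearly single-shot. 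Your reading of the stray $\xi$ as a knob controlling how small $\gamma$ must be is as reasonable as any; the paper's own proof never uses it.
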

\begin{proof}
Fix any expert $i$. Let $w_i^t$ be the weights of  $i$ for $t\in \mathbb{N}$ if $(\vecc r^t)_{t\in\mathbb{N}}$ is played. Then, in round $t$, the expected reward of $i$ (before the discount) is at least $w_i^t \cdot (1-T) \cdot a'$.
Now, consider any sequence of voting profiles $(\tilde{r}^t_i, \vecc r^t_{-1})_{t\in\mathbb{N}}$, and let $\tilde{w}_i^t$ be the weights of  $i$ if that sequence is played. Now, in round $t$, the expected reward of $i$ (before the discount) is at most $(1+\delta) \cdot \tilde{w}_i^t \cdot a$. When calculating the expected overall rewards $R((\vecc r^t)_{t\in\mathbb{N}})$ and $R((\tilde{r}^t_i, \vecc r^t_{-1})_{t\in\mathbb{N}})$, we have
\begin{align*}
R((\vecc r^t)_{t\in\mathbb{N}}) &\ge \sum_{t=0}^{\infty} [ w_i^t \cdot (1-T) \cdot a' \cdot \gamma^t ]\\ &\ge \sum_{t=0}^{\infty} [ w_i^0 \cdot (1-\zeta)^t \cdot (1-T) \cdot a' \cdot \gamma^t ] \\
&= w_i^0 \cdot (1-T) \cdot a' \cdot \sum_{t=0}^{\infty} [(1-\zeta) \cdot \gamma]^t \\ &=  \frac{w_i^0 \cdot(1-T) \cdot a'}{1-(1-\zeta) \cdot \gamma}, %\text{ , and}
\end{align*}
as well as
\begin{align*}
R((\tilde{r}^t_i, \vecc r^t_{-1})_{t\in\mathbb{N}}) &\le \sum_{t=0}^{\infty} [ (1+\delta) \cdot \tilde{w}_i^t \cdot a \cdot \gamma^t]\\
&\le \sum_{t=0}^{\infty} [ (1+\delta) \cdot w_i^0 \cdot (1+\zeta)^t \cdot (1+\varepsilon) \cdot (1-T) \cdot a' \cdot \gamma^t ]\\
&= (1+\delta)\cdot (1+\varepsilon) \cdot w_i^0 \cdot (1-T) \cdot a' \cdot \sum_{t=0}^{\infty} [(1+\zeta) \cdot \gamma]^t\\ 
&=  (1+\delta)\cdot(1+\varepsilon) \cdot \frac{w_i^0 \cdot (1-T) \cdot a'}{1-(1+\zeta) \cdot \gamma} \,.
\end{align*}
It is a matter of simple calculation to see that for small enough $\gamma$, $\frac{1-(1-\zeta)\cdot\gamma}{1-(1+\zeta)\cdot\gamma}\le (1+\varepsilon)$, and thus, 
\begin{align*}
R((\tilde{r}^t_i, \vecc r^t_{-1})_{t\in\mathbb{N}}) &\le (1+\delta) \cdot (1+\varepsilon)^2 \cdot R((\vecc r^t)_{t\in\mathbb{N}}) \\ &\le (1+\delta) \cdot (1+3\varepsilon) \cdot R((\vecc r^t)_{t\in\mathbb{N}}) \,,  
\end{align*}
as desired.
\end{proof}

Given that repeated games introduce a large number of (approximate) pure Nash equilibria, some of which may be of very low quality, it is not possible to replicate our Price of Anarchy result from Theorem \ref{thm:PoA} here. However, that result, coupled with the fact that the sequence of voting profiles in Theorem \ref{thm:PNE-repeated} consists of approximate pure Nash equilibria of the single-shot game (Theorem \ref{thm:PNE}),
directly translate into the following Price of Stability result.

\begin{corollary}\label{cor:PoS-repeated}
Under the assumptions of Theorems \ref{thm:PNE-repeated} and \ref{thm:PoA} (where the original $g_{ij}$ is replaced by $g_{ij}/w^t_i$ if $j$ is an update of round $t$), the Repeated Update Selection game with delayed weight updates and sufficiently small discount factor $\gamma$ has Price of Stability at most $2$.
\end{corollary}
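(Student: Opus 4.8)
The plan is to combine the three results already established. Theorem~\ref{thm:PNE-repeated} exhibits a concrete equilibrium of the repeated game --- the all-honest sequence $(\vecc r^t)_{t\in\mathbb{N}}$ --- and since the Price of Stability places the \emph{best} equilibrium in the denominator, it suffices to lower bound the quality achieved along this particular sequence. For each fixed round $t$, Theorem~\ref{thm:PNE} tells us that the honest profile $\vecc r^t$ is an approximate pure Nash equilibrium of the single-shot game played against the current weight vector $\vecc w^t$, and it trivially respects the semi-strategic constraint (an honest expert never votes against her own assessment, so the defining condition of semi-strategic behaviour is vacuous for her); Theorem~\ref{thm:PoA} then yields $\texttt{Qual}[x(\vecc r^t,\vecc w^t)]\ge \frac12\,\opt(\vecc w^t,\vecc p^t)$, where $\vecc p^t$ collects the experts' priors for the proposals of round $t$. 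Since this holds simultaneously in every round, the repeated game has Price of Stability at most $2$.

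First I would check that the parameter hypotheses line up so that both Theorems~\ref{thm:PNE} and~\ref{thm:PoA} may be invoked. Theorem~\ref{thm:PNE-repeated} fixes $T,\epsilon,\delta,a,a',s$ exactly as in Theorem~\ref{thm:PNE}; the two displayed relations required by Theorem~\ref{thm:PoA}, namely $a=(1+\epsilon)(1-T)a'$ and $(1-T)a'=Ta-(1-T)s$, are just reformulations of the last two bullet points of Theorem~\ref{thm:PNE} (equivalently, of~\eqref{def:threshold}), hence hold automatically. The hypothesis ``$g_{ij}\le w_i^t\cdot a\cdot\delta$ for every proposal $j$ of round $t$'' from Theorem~\ref{thm:PNE-repeated} becomes, after the substitution $g_{ij}\mapsto g_{ij}/w_i^t$ stipulated in the corollary, precisely $g_{ij}\le a\cdot\delta$, the hypothesis of Theorems~\ref{thm:PNE} and~\ref{thm:simple_pne}; and the requirement that the small $p_{ij}$ satisfy the conditions of Theorem~\ref{thm:simple_pne} carries over verbatim. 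Hence all hypotheses of Theorems~\ref{thm:PNE} and~\ref{thm:PoA} hold in each round $t$ with weight vector $\vecc w^t$.

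The one point needing genuine care --- and the (mild) main obstacle --- is pinning down the notion of ``quality of an equilibrium of the repeated game''. The natural choice is the per-round view: for each round $t$, evaluate the estimated quality of the proposal selected against that round's weights $\vecc w^t$ and compare it to $\opt(\vecc w^t,\vecc p^t)$. Because all experts play honestly, the entire weight trajectory $(\vecc w^t)_t$ is deterministic, so the single-shot guarantee of Theorem~\ref{thm:PoA} applies unchanged with $\vecc w=\vecc w^t$ in each round. One also notes that the approximation factor $(1+3\epsilon)(1+\delta)$ of the repeated-game equilibrium is immaterial here, since the proof of Theorem~\ref{thm:PoA} uses only that the analysed profile is honest (hence semi-strategic) and satisfies the conditions of Theorem~\ref{thm:simple_pne}, not its precise approximation quality. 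Everything else is bookkeeping, and the corollary follows.
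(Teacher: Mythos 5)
Your argument is correct and is essentially the paper's own: the paper derives the corollary in one line by observing that the honest sequence from Theorem~\ref{thm:PNE-repeated} is an equilibrium of the repeated game whose per-round profiles are approximate pure Nash equilibria of the single-shot game (Theorem~\ref{thm:PNE}), so Theorem~\ref{thm:PoA} bounds each round's quality within a factor of $2$. Your additional bookkeeping on the $g_{ij}/w_i^t$ rescaling and the per-round notion of quality is a faithful elaboration of what the paper leaves implicit.
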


\section{Conclusions and Open Problems}
In this work, we make the first step in combining aspects of voting and eliciting truthful beliefs at once. Typically voting applications do not involve payments dependent on votes and the chosen outcome, whereas information elicitation (using peer prediction or proper scoring rules) do not involve the experts affecting the chosen action. Using the notion of semi-strategic experts and a variant of approval voting, with appropriate rewards, we can prove the existence of approximate pure Nash equilibria and show that they produce outcomes of good quality. This is only the first step however. The natural follow-up questions would be:
\begin{itemize}
    \item To study randomized mechanisms, which might have better guarantees. For instance, rather than always selecting the proposal with highest approval, the selection could be randomized between the top proposals that exceed a threshold. In this case, experts might be more cautious about lying, as they have less influence on the final selection. In the deterministic case they can be sure that their deviation yielded some benefit. But with randomization, there is always the chance that a different proposal is chosen and it might not be dishonest with too many votes.
    \item To allow experts a richer strategy space, including reporting their prior (i.e., the $p_{ij}$) directly, or even estimating how other experts might act.
\end{itemize}
Either of these approaches could produce a mechanism with improved performance, at a cost of added complexity and perhaps lower robustness.

\section{Acknowledgements}
We would like to thank Nikos Karagiannidis for many enlightening meetings, helping us formulate the model in early versions of this work.

\bibliography{main}
\bibliographystyle{plainnat}

\end{document}